\newtheorem{theo}{Theorem}[section]
\newtheorem{prop}[theo]{Proposition}
\newtheorem{defi}[theo]{Definition}
\theoremstyle{definition}
\newtheorem{rema}[theo]{Remark}
\newcommand{\bel}{\begin{equation} \label}
\newcommand{\ee}{\end{equation}}
\newcommand{\R}{{\mathbb R}}
\def\beq{\begin{equation}}
\def\eeq{\end{equation}}
\newcommand{\bea}{\begin{eqnarray}}
\newcommand{\eea}{\end{eqnarray}}
\newcommand{\beas}{\begin{eqnarray*}}
\newcommand{\eeas}{\end{eqnarray*}}
\begin{document}

\begin{center}
{\Large \bf Combinatorial invariant of nearly integrable Hamiltonians}
\medskip

\today

\end{center}

\medskip

\begin{center}
Quang Sang PHAN

FITA, Vietnam National University of Agriculture, Hanoi, Vietnam \&

Vietnam Institute for Advanced Study in Mathematics, Hanoi, Vietnam

E-mail: pqsang@vnua.edu.vn
\end{center}

\begin{abstract} ~\\
We work with small non-selfadjoint perturbations of a selfadjoint quantum Hamiltonian with two degrees of freedom, assuming that the principal symbol of the selfadjoint part is (classically) a nearly integrable system, together with a globally non-degenerate condition.
We define a monodromy directly from the spectrum of such an operator, in the semiclassical limit.
Moreover, this spectral monodromy allows to detect the topological modification on the dynamics of the nearly integrable system.
It can be identified with the monodromy for KAM invariant tori of the nearly integrable system.
\end{abstract}

\medskip


~\\
{\bf  Keywords:} Hamiltonian systems, monodromy, non-selfadjoint, asymptotic spectral, pseudo-differential operators, KAM theory.

\tableofcontents


\section{Introduction}



We propose in this work a combinatorial invariant (monodromy) of a quantum Hamiltonian which is (classically) nearly integrable, by looking at the spectrum of non-selfadjoint perturbations.

Intuitively when we see a $2D$ lattice with (deformed) rectangular meshes (for example a fishing net or a fishnet shirt...), we propose a mysterious question whether these meshes can be regularly connected each to others, in the sense that certain mesh can be moved on the lattice to an arbitrary mesh? The answer is not evident for any lattice, which may be not a square lattice.
If the answer is positive, we say that there is not monodromy. Otherwise, we say that there exists a monodromy on the lattice.

Such a lattice structure can be appeared in many different mathematical objects. We knew an asymptotic lattice that is used to model the
 joint spectrum of system of commuting selfadjoint operators Refs. \cite{Char88}, and \cite{Cush88}. The quantum monodromy for this lattice was
completely defined in \cite{Vu-Ngoc99}.
For a non-selfadjoint quantum Hamiltonian, in certain assumption, its discrete spectrum may be locally a deformed lattice.
Whether a monodromy can be defined for the discrete spectrum of one single semiclassical operator?

We focus on non-selfadjoint $h-$Weyl-pseudodifferential operators, which are small non-selfadjoint perturbations of selfadjoint operators, with two degrees of freedom.

The first answer in this direction was given in Ref. \cite{QS14}, in which the operators had the simple form $P+ i  \varepsilon Q$, such that the corresponding principal symbols $p$ of $P$ and $q $ of $Q$ commute for the Poisson bracket, $\{p,q\}=0$.
And then, in the recent work Ref. \cite{QS16.1} we treated perturbed operators of the form $  P_{\varepsilon}= P(x,hD_x,\varepsilon; h )$ such that the principal symbol $p$ of the selfadjoint unperturbed operator $P_{\varepsilon=0}$ is a completely integrable Hamiltonian.

In the present work, non-selfadjoint operators of interest are related to nearly integrable Hamiltonian systems.
Such an operator is of the form
\begin{equation} \label{tt2}
P_{\varepsilon, \lambda}=P_{\lambda} (x,hD_x,\varepsilon; h ),
\end{equation}
depending on a small parameter $\varepsilon$ in the regime $h \ll \varepsilon = \mathcal{O}(h^\delta)$, with $0< \delta <1 $, and also smoothly on a small enough parameter $0< \lambda \ll 1$, such that the principal symbol $p_\lambda $ of the unperturbed selfadjoint operator is nearly integrable and globally non-degenerate. As an application of the spectral asymptotic theory Refs. \cite{Hitrik08}, \cite{Hitrik07} and with the help of the KAM theory Refs. \cite{HB90}-\cite{Poschel01} we will show in Section \ref{end mono} that, under some suitable general assumption, the spectrum of $P_{\varepsilon, \lambda}$ has locally the form of a deformed discrete lattice. Moreover, we can prove in Theorem \ref{mtheo2} that this spectrum is an asymptotic pseudo-lattice (see Definition \ref{pseu-lattice}).
Hence, applying a result from Ref. \cite{QS14}, we can define a combinatorial invariant- the \emph{spectral monodromy}, directly from the spectrum of $P_{\varepsilon, \lambda}$.

In aspect of the classical theory, it is known from Ref. \cite{Broer07} that there exists a geometric invariant (like monodromy) of the foliation by KAM invariant tori for the nearly integrable Hamiltonian system $p_\lambda $.
The spectral monodromy that we define in this work is the complete answer for an open question mentioned in Ref. \cite{Broer07} about the existence of quantum monodromy in nearly integrable cases. It explains certain defects in the quantum spectrum Refs. \cite{Cush04}, and \cite{Cush00}.
Moreover we shall prove in Theorem \ref{reco} that the spectral monodromy of $P_{\varepsilon, \lambda}$ allows to recover the monodromy of $p_\lambda $.
It means that the spectral monodromy proposes a way of detecting the topological modification (angle-action coordinates) on the dynamics of nearly integrable Hamiltonian systems associated to non-selfadjoint quantum Hamiltonians from only one spectrum.

\section{Preparation}

\subsection{Classical theory}

We recall here some notions and results from the classical theory.

\begin{defi} \label{ints}
 An integrable system on a symplectic manifold $(W, \sigma)$ of dimension $ 2n $ ($n \geq 1$) is given $n$ smooth real-valued functions $f_1, \dots, f_n$ in involution with respect to the Poisson bracket generated from the symplectic form $ \sigma$, whose differentials are almost everywhere linearly independent.
In this case, the map $$ F=(f_1, \dots, f_n ): W \rightarrow \mathbb R^n $$ is also called an \textit{integrable system}.

A smooth function $f_1$ is called completely integrable if there exists $n-1$ functions $f_2, \dots, f_n$ such that $F=(f_1, \dots, f_n )$ is an integrable system.
\end{defi}

Let $U$ be an open subset of regular values of $F$. Then we have,

\begin{theo}[Angle-action theorem] (Refs. \cite{Arnold67}, and \cite{Duis80})  \label{A-A}
Let $c \in U$, and $\Lambda_c$ be a compact regular leaf of the fiber $F^{-1}(c)$. Then there exists an open neighborhood $V= V^c$ of $\Lambda_c$ in $W$ such that $F\mid_ {V} $ defines a smooth locally trivial fibre bundle onto an open neighborhood $ U^c \subset U$ of $c$, whose fibres are invariant Lagrangian $n-$tori. Moreover, there exists a symplectic diffeomorphism $\kappa = \kappa^c$,
    $$\kappa= (x,\xi): V \rightarrow \mathbb T ^n \times  A, $$
with $A= A^c \subset \mathbb R^n$ is an open subset, such that $F\circ \kappa^{-1}(x, \xi)= \varphi(\xi)$ for all $x \in \mathbb T^n $,
and $\xi \in A$, and here $\varphi= \varphi^c: A \rightarrow \varphi (A)=U^c $ is a local diffeomorphism. We call $( x, \xi)$ local angle-action variables near $\Lambda_c$ and $(V, \kappa)$ a local angle-action chart.
\end{theo}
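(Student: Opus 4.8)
The plan is to follow the classical Liouville--Arnold scheme in three stages: first identify the topology of the leaf $\Lambda_c$, then build the action variables $\xi$, and finally the conjugate angle variables $x$, checking along the way that the resulting chart is symplectic.

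First I would exploit that $c$ is a regular value: on a neighborhood of $\Lambda_c$ the differentials $df_1, \dots, df_n$ are linearly independent, hence so are the Hamiltonian vector fields $X_{f_1}, \dots, X_{f_n}$. Since the $f_i$ are in involution, $[X_{f_i}, X_{f_j}] = X_{\{f_i, f_j\}} = 0$, so these fields commute, and each is tangent to the fibres because $X_{f_i} f_j = \{f_j, f_i\} = 0$. As $\Lambda_c$ is compact they are complete, and their joint flow defines a smooth action of $\mathbb{R}^n$ on $\Lambda_c$ that is locally free and transitive on the connected component. The stabilizer of a point is therefore a discrete subgroup $\Gamma_c \subset \mathbb{R}^n$ --- the period lattice --- and compactness of $\Lambda_c$ forces $\Gamma_c$ to have full rank $n$, whence $\Lambda_c \cong \mathbb{R}^n / \Gamma_c \cong \mathbb{T}^n$. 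Performing this construction simultaneously over nearby values, together with Ehresmann's theorem for the proper submersion $F$ restricted to a small neighborhood, yields an open neighborhood $V$ on which $F$ is a locally trivial torus bundle over $U^c$; this already gives $V$, the bundle structure, and the Lagrangian nature of the fibres (the $f_i$ being in involution means $T\Lambda_c$ is isotropic of dimension $n$, hence Lagrangian).

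Next I would construct the action variables. Shrinking $V$ if necessary, one has a primitive $\alpha$ of the symplectic form, $\sigma = d\alpha$. Choosing a basis $\gamma_1, \dots, \gamma_n$ of $1$-cycles on the reference torus and transporting it smoothly over $U^c$, I would set
$$ I_j(c) = \frac{1}{2\pi} \oint_{\gamma_j(c)} \alpha, \qquad j = 1, \dots, n. $$
Because each fibre is Lagrangian, $\alpha$ restricts to a closed form on it, so by Stokes' theorem $I_j$ depends only on the homology class, hence only on $c$; thus $\xi := I = \psi \circ F$ for a local diffeomorphism $\psi$, and the map $\varphi := \psi^{-1}$ is the one sought, since then $F = \varphi(\xi)$. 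A computation with the period lattice shows that the $I_j$ are functionally independent, so $\xi$ is a legitimate system of $n$ new momenta that Poisson-commute, taking values in the open set $A := I(U^c)$.

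Finally I would produce the angle variables as the flow times of the commuting Hamiltonian fields $X_{I_1}, \dots, X_{I_n}$: the normalization by $2\pi$ makes these flows $2\pi$-periodic, defining a free $\mathbb{T}^n$-action, and after fixing a smooth Lagrangian section $\Sigma$ of the bundle I would define $x_j$ as the time elapsed along $X_{I_j}$ from $\Sigma$. The map $\kappa = (x, \xi)$ is then a diffeomorphism $V \to \mathbb{T}^n \times A$ satisfying $F \circ \kappa^{-1}(x, \xi) = \varphi(\xi)$. The main obstacle is the final verification that $\kappa$ is symplectic, that is $\sigma = \sum_j d\xi_j \wedge dx_j$. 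The relations $\{x_i, \xi_j\} = \delta_{ij}$ and $\{\xi_i, \xi_j\} = 0$ are essentially built into the construction, but proving $\{x_i, x_j\} = 0$ is delicate: it requires choosing the section $\Sigma$ to be Lagrangian and using that $\alpha$ is closed on the fibres, so that the angles conjugate to the period-integral actions close up consistently. This cohomological step --- reconciling the section, the period lattice, and the Liouville form --- is where the real work lies.
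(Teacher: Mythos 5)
The paper contains no proof of this statement: it is the classical Liouville--Arnold (angle-action) theorem, quoted verbatim from Refs.~\cite{Arnold67} and \cite{Duis80} and used as a black box, so the only meaningful comparison is with the argument in those references. Your sketch is exactly that classical argument and is correct in outline: commuting, complete Hamiltonian vector fields giving a locally free, transitive $\mathbb{R}^n$-action whose stabilizer is a full-rank period lattice (hence $\Lambda_c\cong\mathbb{T}^n$); Ehresmann's theorem for the local torus fibration; involutivity giving the Lagrangian property of the fibres; period integrals $\oint_{\gamma_j}\alpha$ of a local primitive of $\sigma$ (which exists on $V$ since $V$ retracts onto the Lagrangian torus, so $[\sigma]|_V=0$) for the actions; and the $2\pi$-periodic flows of $X_{I_j}$ measured from a section for the angles. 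The one step you flag but do not execute --- producing a Lagrangian section and verifying $\{x_i,x_j\}=0$ --- is indeed the technical heart of Duistermaat's treatment; the standard completion is to define angles from an arbitrary smooth section, observe that the defect $\sigma-\sum_j d\xi_j\wedge dx_j$ is the pullback of a closed $2$-form on the base, and then kill this form, after shrinking $U^c$ to a contractible set, by a fibre-wise translation of the section (equivalently, by choosing the section Lagrangian). With that addition your plan is a complete proof, and it is the same proof the paper's sources give.
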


Notice that one chooses usually the local chart such that the torus $\Lambda_c$ is sent by $\kappa$ to the zero section $T ^n \times \{0 \}$. By this theorem, for every $a \in U^c $, then $\Lambda_a:= F^{-1}(a) \cap V^c$ is an invariant Lagrangian $n-$torus, called a Liouville torus, and we write
   \begin{equation} \label{tori} \Lambda_a \simeq \kappa(\Lambda_a)=\mathbb T ^n \times \{ \xi_a \}:= \Lambda_{\xi_a}, \end{equation}
with some $\xi_a \in A$.


\subsection{The monodromy of an asymptotic pseudo-lattice} \label{mapl}
We recall here the definition of an asymptotic pseudo-lattice and its monodromy given in Ref. \cite{QS14}. This is a discrete subset of $\mathbb R^2$ admitting a particular property.

We shall see later that the spectrum should be included in a horizontal band of size $\mathcal{O}(\varepsilon)$. This suggests us
introducing the function
        \begin{eqnarray}   \label{chi}
          \chi :  \mathbb R^2  & \rightarrow&  \mathbb R^2  \cong  \mathbb C
            \\
            u= (u_1,u_2)  &  \mapsto & \chi_u= \ (u_1, \varepsilon u_2) \cong u_1+i \varepsilon u_2,   \nonumber
          \end{eqnarray}
in which we identify $\mathbb C$ with $\mathbb R^2$.

For any subset $U$ of $\mathbb R^2$ we denote
\beq \label{elp}
 \ U(\varepsilon)=  \chi(U).
\eeq

\begin{defi} \label{pseu-lattice}
     Let $U$ be an open subset of $\mathbb R^2$ with compact closure and
     let $\Sigma(\varepsilon, h)$ (which depends on small $h$ and $\varepsilon$) be a discrete subset of $U(\varepsilon)$.
     For $h, \varepsilon$ small enough and in the regime $ h \ll \varepsilon$, we say that $(\Sigma(\varepsilon, h), U(\varepsilon))$
     is an asymptotic pseudo-lattice if:
         for any small parameter $\alpha >0$, there exists a set of good values in $\mathbb R^2$, denoted by $\mathcal{G}(\alpha)$, whose complement is of size $\mathcal O(\alpha)$ in the sense:
             $$\mid {}^C \mathcal{G} (\alpha) \cap I  \mid \leq C \alpha \mid I \mid, $$
with a constant $C>0$ for any domain $I \subset \mathbb R^2$; for all $c \in U$, there exists a small open subset $U^c \subset U $ around $c$ such that for every good value $a \in U^c  \cap \mathcal{G}(\alpha)$, there is an adapted good rectangle $R^{(a)}(\varepsilon,h) \subset U^c( \varepsilon)$ of the form \eqref{cua so} (with $a=(E,G)$), and a smooth local diffeomorphism $f= f(\cdot, \varepsilon;h)$ which sends $R^{(a)}(\varepsilon,h)$ on its image, satisfying
\beq  \label{semi-cart}
    \Sigma( \varepsilon, h) \cap R^{(a)}(\varepsilon,h) \ni  \mu \mapsto  f(\mu,\varepsilon; h) \in h \mathbb Z^2 +\mathcal O(h^\infty).                        \eeq
Moreover, the function $\widetilde{f}:= f \circ \chi$, with $\chi$ defined by \eqref{chi}, admits an asymptotic expansion in
$(\varepsilon,\frac{h}{\varepsilon})$ for the $C ^\infty-$topology in a neighborhood of $a$,
uniformly with respect to the parameters $h$ and $\varepsilon$, such that its leading term $\widetilde{f}_0$
is a diffeomorphism, independent of $\alpha$, locally defined on the whole $U^c $ and independent
of the selected good values $a \in U^c $.

We also say that the couple $(f(\cdot, \varepsilon;h), R^{(a)}(\varepsilon,h) )$ is a $h-$chart of $\Sigma( \varepsilon, h)$,
and the family of $h-$charts $(f(\cdot, \varepsilon;h), R^{(a)}(\varepsilon,h) )$, with all $a
\in U^c \cap \mathcal{G}(\alpha) $, is a local pseudo-chart on $U^c( \varepsilon)$ of
$(\Sigma(\varepsilon, h), U(\varepsilon))$.
\end{defi}





Let $\{ U^j \}_{j \in \mathcal{J} }$, here $\mathcal{J}$ is a finite index set, be an arbitrary (small enough) locally finite covering of $U$. Then the asymptotic pseudo-lattice $(\Sigma(\varepsilon, h), U(\varepsilon))$ can be covered by associated local pseudo-charts $ \{ \left(  f_ j (\cdot, \varepsilon; h),  U^j (\varepsilon) \right) \}_{j \in \mathcal{J} }$. Note from Definition \ref{pseu-lattice} that the leading terms $ \widetilde{f}_{j,0}(\cdot, \varepsilon; h)$ are well defined on whole $ U^j$ and we can see them as
the charts of $U$. Analyzing transition maps, we had the following result.
\begin{prop} [Ref. \cite{QS14}]   \label{dl}
 On each nonempty intersection $  U^i \cap U^j \neq \emptyset$, $i, j \in \mathcal{J}$, there exists an unique integer linear map $M_{ij} \in GL(2,
\mathbb Z)$ (independent of $h, \varepsilon$) such that:
\begin{equation} \label{transition-pseu}
d \big ( \widetilde{f}_{i, 0} \circ (\widetilde{f}_{j, 0})^{-1} \big )= M_{ij}.
\end{equation}
\end{prop}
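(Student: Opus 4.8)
The plan is to show that the transition map between the two $h$-charts, read through their leading terms, has a constant integer-matrix differential, this being forced by the single structural fact that \emph{every} chart turns the spectrum into the standard lattice $h\mathbb{Z}^2$ modulo $\mathcal{O}(h^\infty)$.

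First I would reduce the statement to one transition map. Since $\widetilde{f}_j = f_j \circ \chi$, the factor $\chi$ from \eqref{chi} cancels in the composition, giving on the overlap the clean identity
\[
\widetilde{g}_{ij} := \widetilde{f}_i \circ \widetilde{f}_j^{-1} = f_i \circ f_j^{-1}.
\]
Setting $\widetilde{\Sigma} = \chi^{-1}(\Sigma(\varepsilon,h)) \subset U$, property \eqref{semi-cart} says that both $\widetilde{f}_i$ and $\widetilde{f}_j$ send the points of $\widetilde{\Sigma}$ lying in a good rectangle to $h\mathbb{Z}^2 + \mathcal{O}(h^\infty)$, so $\widetilde{g}_{ij}$ maps $h\mathbb{Z}^2 + \mathcal{O}(h^\infty)$ into $h\mathbb{Z}^2 + \mathcal{O}(h^\infty)$. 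By Definition \ref{pseu-lattice} it inherits an asymptotic expansion in $(\varepsilon, h/\varepsilon)$ whose zeroth-order (hence $h,\varepsilon$-independent) term is $\widetilde{g}_{ij,0} = \widetilde{f}_{i,0} \circ \widetilde{f}_{j,0}^{-1}$, a diffeomorphism defined on the whole overlap and independent of $\alpha$ and of the chosen good values.

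Next I would run a finite-difference argument at the scale $h$. Fixing a good value $a \in U^i \cap U^j$ and using that the lattice spacing $h$ is infinitely finer than the good rectangle while the complement of the good values is only of size $\mathcal{O}(\alpha)$, the rectangle contains, for each basis vector $e_k$, points $\mu, \mu' \in \widetilde{\Sigma}$ whose $\widetilde{f}_j$-images are the adjacent nodes $nh$ and $(n+e_k)h$ up to $\mathcal{O}(h^\infty)$. Then $\mu' - \mu = h\, d\widetilde{f}_j^{-1}(nh)\,e_k + \mathcal{O}(h^2)$, whereas the two $\widetilde{f}_i$-images differ by $h\,m_k + \mathcal{O}(h^\infty)$ with $m_k \in \mathbb{Z}^2$. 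Taylor expanding $\widetilde{f}_i$ and using the chain rule yields, after dividing by $h$,
\[
d\widetilde{g}_{ij}(nh)\,e_k = m_k + \mathcal{O}(h), \qquad m_k \in \mathbb{Z}^2 .
\]
Passing to the limit $\varepsilon \to 0$, $h/\varepsilon \to 0$, the left-hand side converges to $d\widetilde{g}_{ij,0}\,e_k$; an integer vector approximating a convergent quantity within $\mathcal{O}(h)$ must stabilize, so $d\widetilde{g}_{ij,0}\,e_k \in \mathbb{Z}^2$. Hence $d\widetilde{g}_{ij,0}$ takes values in the integer matrices, and being continuous on the (after a refinement, connected) overlap it is constant, say $M_{ij}$. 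Running the same argument on the inverse transition $\widetilde{g}_{ji,0} = \widetilde{g}_{ij,0}^{-1}$ shows $M_{ij}^{-1}$ is also integral, so $M_{ij} \in GL(2,\mathbb{Z})$; it is unique as the differential of the fixed leading term and independent of $h,\varepsilon$ for the same reason.

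The step I expect to be the main obstacle is the rigorous bookkeeping of the asymptotic expansions under composition and inversion: one must justify that $\widetilde{g}_{ij}$ genuinely admits an expansion in $(\varepsilon, h/\varepsilon)$ with leading term $\widetilde{f}_{i,0}\circ\widetilde{f}_{j,0}^{-1}$, \emph{uniformly} on the overlap, so that the $\mathcal{O}(h)$ remainder in the finite-difference identity is truly negligible against the integer lattice as the leading part is extracted. A secondary technical point is guaranteeing, through the density of good values, that enough pairs of adjacent spectrum nodes lie simultaneously in the domains of both charts, and that the intersections can be taken connected so that the locally integer-valued differential is forced to be globally constant.
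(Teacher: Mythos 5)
Your skeleton is the right one, and it is essentially the mechanism of the cited proof in Ref.~\cite{QS14} (which follows the quantum-monodromy argument of Ref.~\cite{Vu-Ngoc99}): cancel $\chi$ to reduce to $f_i \circ f_j^{-1}$, compare images of spectral points sitting at adjacent lattice nodes, let the resulting integer vectors stabilize as $h \to 0$ so that $d\big(\widetilde{f}_{i,0} \circ (\widetilde{f}_{j,0})^{-1}\big)$ is integer-valued, conclude constancy by continuity and connectedness, and get membership in $GL(2,\mathbb{Z})$ by running the same argument on the inverse transition. The expansion bookkeeping you flag as the main obstacle is not really one: stability of expansions in $(\varepsilon, h/\varepsilon)$ under inversion and composition, with the leading terms composing as expected, is exactly what Proposition~\ref{D.A inverse} provides, and Definition~\ref{pseu-lattice} already gives you leading terms defined on the whole of $U^i$, $U^j$, independent of $\alpha$ and of the chosen good values.

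The genuine gap is in the step you dismiss as secondary: producing points $\mu, \mu' \in \Sigma(\varepsilon,h)$ inside one good rectangle whose $f_j$-images are \emph{adjacent} nodes $hn$ and $h(n+e_k)$ modulo $\mathcal O(h^\infty)$. You justify this by the fineness of the lattice together with the measure bound $\mid {}^C\mathcal{G}(\alpha) \cap I \mid \leq C \alpha \mid I \mid$; but that bound concerns the good \emph{centers} $a \in U$ (i.e.\ which rectangles exist) and says nothing about which nodes of $h\mathbb{Z}^2$ inside a fixed rectangle $R^{(a)}(\varepsilon,h)$ are actually occupied by points of $\Sigma(\varepsilon,h)$. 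Condition \eqref{semi-cart} as written is only an ``into'' statement, and with ``into'' alone the Proposition is false: for $\Sigma = \chi\big(2h\mathbb{Z}^2 \cap U\big)$, both $\widetilde f = \mathrm{id}$ and $\widetilde f = \tfrac12\,\mathrm{id}$ define admissible systems of $h$-charts, and mixing them over two overlapping sets $U^i, U^j$ produces a transition differential $\tfrac12 I \notin GL(2,\mathbb{Z})$. So your finite-difference step needs, as an explicit input, that each $h$-chart is a bijection modulo $\mathcal O(h^\infty)$ between $\Sigma \cap R^{(a)}(\varepsilon,h)$ and a full portion of $h\mathbb{Z}^2$ --- which is what holds in the intended situation (in \eqref{eigenvalues-with lambda} the eigenvalues in a good rectangle are enumerated by all $k$ in a portion of $\mathbb{Z}^2$) and is the reading on which the cited proof relies; it cannot be extracted from the density of good values. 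Once that surjectivity is granted, the rest of your argument is correct, including the passage from integrality of the differential at the good values (dense after letting $\alpha \downarrow 0$) to integrality everywhere by continuity, and then to constancy on a connected overlap.
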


Then we define the (linear) monodromy of the asymptotic pseudo-lattice $(\Sigma(\varepsilon, h), U(\varepsilon))$ as the $1$-cocycle of $\{M_{ij} \} $, modulo coboundary, in the \v{C}ech cohomology of $U$ with values in the integer linear group $GL(2, \mathbb Z)$,
denoted by
         \begin{equation} \label{mono}
             [\mathcal M] \in \check{H}^1(U,GL(2, \mathbb Z) ).
         \end{equation}
It doesn't depend on the selected finite covering $\{ U^j \}_{j \in \mathcal{J} }$ and the small parameters $h, \varepsilon$.

We can also associate the class $[\mathcal M]$ with its holonomy, that is a group morphism from the fundamental group $\pi_1(U)$ to the group $ GL(2, \mathbb Z)$, modulo conjugation. The triviality of $[\mathcal M]$ is equivalent to the one of its holonomy.


\subsection{Weyl-quantization}  \label{sec2.1}
We will work throughout this article with pseudodifferential operators obtained by the $h-$Weyl-quantization of a standard space of symbols on $T^*M =\mathbb R^{2n}_{(x,\xi)}$, here $M= \mathbb R^n$ or a compact manifold of $n$ dimensions, and in particular $n=2$. We denote $\sigma $ the standard symplectic $2-$form on $T^*M$.

 Note that if $ M = \mathbb R ^ n $, the volume form $ \mu (dx) $ is naturally induced by the Lebesgue measure on $ \mathbb R ^ n $.
    If $ M $ is a compact Riemannian manifold, then the volume form $ \mu(dx) $ is induced by the given Riemannian structure of $ M $.
    Therefore in both cases the volume form is well defined.

In the following we represent the quantization in the case $M= \mathbb R^n$. In the manifold case, the quantization is introduced suitably.
We refer to Refs. \cite{Dimas99}-\cite{Shubin01} for the theory of pseudodifferential operators.

        \begin{defi} \label{fonc ord}
                        A function $m: \mathbb R^{2n} \rightarrow (0, + \infty)$ is called an order function
                     if there are constants  $C,N >0$ such that
                                $$m(X)  \leq C \langle X-Y\rangle^{ N} m(Y), \forall X,Y \in \mathbb R^{2n},$$
        with notation $\langle Z\rangle= (1+ |Z|^2)^{1/2}$ for $Z \in \mathbb R^{2n}$.
        \end{defi}


         \begin{defi}
                        Let $m$ be an order function and $k \in \mathbb R$, we define the class of symbols of $h$-order $k$, denoted by $S^k(m)$, of functions $(a(\cdot;h))_{h \in (0,1]}$ on $\mathbb R^{2n}_{(x,\xi)}$ by
                        \begin{equation}
                                S^k(m)= \{ a \in C^\infty (\mathbb R^{2n})
                                 \mid  \forall \alpha \in \mathbb N ^{2n}, \quad |\partial^\alpha a | \leq  C_\alpha h^k m \} ,
                        \end{equation}
         for some constant $C _\alpha >0$, uniformly in $h \in (0,1]$. \\
         A symbol is called $\mathcal O(h^\infty)$ if it's in $\cap _{k \in \mathbb R } S^k(m):= S^{\infty}(m) $.
        \end{defi}

        Then $ \Psi^k(m)(M)$ denotes the set of all (in general unbounded) linear operators $A_h$ on $L^2(M, \mu(dx)) $, obtained from the $h-$Weyl-quantization of symbols $a(\cdot;h) \in S^k(m) $ by the integral operator
        \begin{equation} \label{symbole de W}
                            (A_h u)(x)=(Op^w_h (a) u)(x)= \frac{1}{(2 \pi h)^n}
                                 \int_{ \mathbb R^{2n}} e^{\frac{i}{h}(x-y)\xi}
                                 a(\frac{x+y}{2},\xi;h) u(y) dy d\xi.
        \end{equation}

In this work, we always assume that symbols admit classically asymptotic expansions in integer powers of $h$.
\begin{equation}
        a(x, \xi; h ) \sim \sum_{j=0}^\infty
                             a_{j}(x,\xi) h^j, h \rightarrow 0,
     \end{equation}

The leading term in this expansion is called the principal symbol of operators.




\section{Spectral monodromy in the nearly integrable case}


In this section we consider small non-selfadjoint perturbations of a selfadjoint semiclassical operator in $2$ dimension, whose leading symbol is close to a completely integrable non-degenerate system.

Together with some global assumptions on the dynamics as given in Section \ref{gnrass}, a radical assumption in the spectral asymptotic theory for such an operator from a $h-$depending complex window, given in Ref. \cite{Hitrik07}, is that the real energy surface at certain level of the unperturbed leading symbol possesses several Hamiltonian flow invariant Lagrangian tori, satisfying a Diophantine condition.
In the nearly integrable case, this may be archived with the help of KAM theory (by Kolmogorov-Arnold-Moser), see Refs. \cite{Kol67}, and \cite{HB90}-\cite{Poschel01}.

We will give spectral asymptotic results to show that the spectrum of such an operator should be an asymptotic pseudo-lattice. Then we can define its spectral monodromy by applying Section \ref{mapl}.

Through this section, $ M $ denotes $\mathbb R^2$ or a connected compact analytic real (Riemannian) manifold of dimension $ 2 $.

\subsection{Quasi-periodic flows of nearly integrable systems} \label{kams}

Classical KAM theory allows to treat perturbations of completely integrable Hamiltonian systems. Under a Kolmogorov condition, this theory proves locally the persistence of invariant Lagrangian tori, called KAM tori, such that on a KAM torus the classical flow of the unperturbed systems stills quasi-periodic with Diophantine constant frequency. Moreover, the union of these KAM tori is a nowhere dense set, with complement of small measure in the phase space.



We consider a perturbed Hamiltonian that is close to a completely integrable (non-degenerate) one
\beq  p_\lambda= p+ \lambda p_1, \ 0< \lambda \ll 1,  \eeq where $p$ and $p_1$ are holomorphic
bounded Hamiltonians in a tubular neighborhood of $T^*M$, real on  $T^*M$, and moreover $p$ is a completely integrable Hamiltonian system, according Definition \ref{ints}.

For some $E \in \R$, and suppose that $\Lambda_a \subset p^{-1}(E)$ is an $H_p-$invariant Diophantine Lagrangian torus.
It can be locally embedded in a Lagrangian foliation of $H_p-$invariant tori.
By Theorem \ref{A-A}, there is an angle-action local chart $(\kappa, V)$ near $\Lambda_a$
\begin{equation}  \label{coor} \kappa= (x,\xi): V \rightarrow \mathbb T ^2 \times  A, \end{equation}
such that the function $p$ becomes a function only of $\xi$,
           \begin{equation} \label{p}
                    p\circ \kappa^{-1} =p(\xi)= p(\xi_1, \xi_2), \ \xi \in A,         \end{equation}
and then
$$ p_\lambda \circ \kappa^{-1}= p_\lambda (x, \xi)=p (\xi)+ \lambda p_1 (x, \xi). $$
For any Liouville torus $\Lambda \subset V $ such that by $\kappa$,
\begin{equation}  \label{Lam}
\Lambda \simeq \Lambda_{\xi}, \xi \in A,
\end{equation}
we define the frequency of $\Lambda$ (also of $\Lambda_{\xi}$) by
\begin{equation} \label{frequence}
\omega(\xi)= \frac{\partial p}{\partial \xi} (\xi)= \big ( \frac{\partial p}{\partial \xi_1} (\xi),
 \frac{\partial p}{\partial \xi_2} (\xi) \big ), \ \xi \in A,
 \end{equation}
 and its rotation number by
 \begin{equation} \label{rho}
 \rho(\xi)=  \big [ \frac{\partial p}{\partial \xi_1} (\xi):
 \frac{\partial p}{\partial \xi_2} (\xi) \big ], \ \xi \in A ,
 \end{equation}
 viewed as an element of the real projective line.
We note that $\omega$ and $\rho$ depend analytically on $\xi \in A$.

Then the Hamiltonian flow of $p$ on $\Lambda $ is quasi-periodic of the constant Hamiltonian vector field
\beq \label{Hp}
 H_p(x, \xi)=  \omega_1 (\xi)
\frac{\partial }{\partial x_1}+  \omega_2(\xi)  \frac{\partial }{\partial x_2}, \ x \in \mathbb
T^2, \ \xi \in A. \eeq

\begin{defi} \label{Kol}
We say that $p$ is locally (Kolmogorov) non-degenerate (on $V$) if the local
frequency map $\omega: A \rightarrow \mathbb R^2$, given by \eqref{frequence}, is a
diffeomorphism onto its image.
\end{defi}
In fact, this condition is equivalent to
$$det [ \frac{\partial \omega}{\partial \xi}] = det [\frac{\partial^2 p}{\partial \xi^2}] \ \neq 0 \ \textrm{on} \ A,$$
and it means that the $H_p-$invariant Liouville tori in $V$ can (locally) be parametrized by
their frequencies.

\begin{defi}  \label{diop}
 Let $\alpha >0 $, $d>0$, and $\Lambda \simeq \Lambda_{\xi} $ be an invariant Lagrangian torus, as in \eqref{Lam}.
 We say that $\Lambda$ is $(\alpha,d)-$Diophantine if its frequency, defined in
 (\ref{frequence}), satisfies
 \begin{equation}  \label{dn alpha-d dioph}
     \omega(\xi) \in D_{\alpha,d}= \big \{   \omega \ \in \mathbb R^2 \big |  \   | \langle \omega,k \rangle |  \geq \frac{\alpha}{ |k|^{1+d}}, \
     \forall \ k \in \mathbb Z^2 \backslash \{ 0 \} \big \}.
 \end{equation}

\end{defi}
Notice also that when $d>0$ is fixed, the Diophantine property (for some $\alpha >0$) of $\Lambda$ is independent of selected angle-action coordinates, see Ref. \cite{Bost86}. If $\Lambda$ is $(\alpha,d)-$Diophantine, then its frequency must be irrational and the result in Remark \ref{rem1} is true.

It is known that the set $D_{\alpha,d} $ is a closed set with closed half-line structure. When we take
$\alpha$ to be sufficiently small, it is a nowhere dense set but with no isolated points. Moreover, its
measure tends to large measure as $\alpha$ tends to $0$, i.e., the measure of its complement is of order
$\mathcal{O}(\alpha)$. See Refs. \cite{HB90}, and \cite{Poschel01}.

Let $\Omega= \omega(A) $ be the open range of the frequency map $\omega$.
Let $\Omega_{\alpha, d} \subset \Omega$ be the subset of frequencies which satisfy the Diophantine condition \eqref{dn alpha-d dioph} and whose distance to the boundary of $\Omega$ is at least equal to $\alpha$.
It is known that the set $\Omega_{\alpha, d}$ is a nowhere dense perfect subset of large measure for sufficiently small $\alpha$.
The measure of $\Omega \setminus \Omega_{\alpha, d} $ is $\mathcal O(\alpha )$, which tends to zero as $\alpha \downarrow 0$, see Refs. \cite{Poschel01}, and \cite{HB90}.
Finally, we define the subset $$A_{\alpha, d}=  \omega^{-1} ( \Omega_{\alpha, d}) \subset A.$$
It is true that $A_{\alpha, d}$ is also a nowhere dense perfect subset of large measure. The measure of the complement subset $A \setminus A_{\alpha, d}$ is of order $\mathcal O(\alpha )$ as $\alpha \downarrow 0$, see Ref. \cite{Broer07}. The intersection of $p^{-1}(E)$ with $\mathbb T^2 \times A$ is of the form  $\mathbb T^2 \times \Gamma_a $, with a some curve denoted by $\Gamma_a$ in $A$, passing through $\xi_a$.

Now for $\alpha$ small enough, we have the quasi-periodic stability of the Diophantine invariant
Lagrangian tori in $\mathbb T^2 \times A_{\alpha, d} $, as the following theorem.
It is combined form the different known versions of the classical KAM theorem, see Refs. \cite{Poschel82}, \cite{Bost86}, and \cite{Poschel01}.


\begin{theo} [Local KAM] \label{kam}

Assume that $p$ is locally non-degenerate as Definition \ref{Kol}. Let $d >0$ fixed and $\alpha >0$ be small enough. Assume that $0
< \lambda   \ll \alpha ^2 $. Then there exists a map $\Phi_\lambda: \mathbb T^2 \times A
\rightarrow \mathbb T^2 \times A$ with the following properties:
\begin{enumerate}
            \item $\Phi_\lambda$, depending analytically on $\lambda$, is a $C ^\infty-$diffeomorphism onto its image, close to the identity map in the $C ^\infty-$topology.
            \item For each $\xi \in A$, the invariant Lagrangian torus $\Lambda_\xi $ is sent, by $\Phi_\lambda$, to $ \Phi_\lambda (\Lambda_\xi) $ which is a Lagrangian
                torus, (close to $\Lambda_\xi$), denoted by $\Lambda_{\xi_ \lambda} $, and of the form $\Lambda_{\xi_\lambda}= \mathbb T^2
                \times \{\xi_\lambda \} $, with some $\xi_\lambda$ in a certain open subset
                $A_\lambda \subset A$, induced by $\Phi_\lambda$.

                Moreover, if $\xi \in A_{\alpha, d}$, then the torus $\Lambda_{\xi_ \lambda}$, with
                $\xi_\lambda$ in a certain subset $A_{\alpha, d, \lambda} \subset A_\lambda $,
                is still Diophantine $H_{p_\lambda}-$invariant torus, called a local KAM torus. The restricted map $\Phi_\lambda |_{\Lambda_\xi} $ on each Diophantine invariant Lagrangian torus $\Lambda_\xi$, with $\xi \in A_{\alpha, d}$,
                conjugates the Hamiltonian vector field
                $H_p|_{\Lambda_\xi}= H_p(x, \xi)$, given in \eqref{Hp}, to the Hamiltonian vector
                field $H_{p_\lambda}|_{\Lambda_{\xi_ \lambda}}$, i.e., $ (\Phi_\lambda|_{\Lambda_\xi})_* H_p = H_{p_\lambda}$.

                In particular, if $\xi \in \Gamma_a $, then the torus $\Lambda_{\xi_ \lambda}
                \subset p_\lambda ^{-1}(E) \cap \ \mathbb T^2 \times \Gamma_{a, \lambda}$, with a
                certain curve $\Gamma_{a, \lambda} \subset A_\lambda$. Moreover, for $\xi \in
                \Gamma_a \cap A_{\alpha, d}$, the Liouville measure of the complement of the
                union of the KAM tori $\Lambda_{\xi_ \lambda} $, in $p_\lambda ^{-1}(E)$, is
                of order $\mathcal{O}(\alpha)$.
\end{enumerate}
\end{theo}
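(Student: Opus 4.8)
The plan is to realize this statement as a synthesis of the classical KAM theorem in its parametrized (Pöschel) form together with the quasi-periodic stability formulation that upgrades the Whitney-smooth conjugacy defined on the Cantor set of Diophantine frequencies to a single global $C^\infty$ map. Since $p$ is locally non-degenerate, the frequency map $\omega : A \to \Omega$ is a diffeomorphism by Definition \ref{Kol}, so I would use the action $\xi$ (equivalently the frequency $\omega(\xi)$) as a parameter and seek, for each $\xi \in A_{\alpha, d}$, a symplectic change of coordinates conjugating $H_{p_\lambda}$ to the linear flow $H_p(x,\xi)$ of \eqref{Hp} on the torus $\mathbb{T}^2 \times \{\xi\}$. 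The persistence of these tori and the conjugacy property together are exactly the content of the local KAM statement, and the non-degeneracy guarantees that distinct Diophantine frequencies label distinct tori, so the persisting family is cleanly parametrized.

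The analytic core is the Newton (quadratic) iteration. Writing $p_\lambda \circ \kappa^{-1} = p(\xi) + \lambda p_1(x,\xi)$, at each step I would look for a symplectic transformation, generated by a function $S(x,\xi)$, that removes the angle-dependent part of the perturbation to second order. This reduces to the homological equation $\{p, S\} = \langle p_1 \rangle - p_1$ modulo higher order terms, whose Fourier coefficients are obtained by dividing by the small divisors $i \langle \omega(\xi), k \rangle$, $k \in \Z^2 \setminus \{0\}$. The Diophantine condition $\omega(\xi) \in D_{\alpha, d}$ from \eqref{dn alpha-d dioph} provides the lower bound $|\langle \omega(\xi), k \rangle| \geq \alpha |k|^{-1-d}$, which controls these divisors at the cost of a geometric shrinking of the width of the complex analyticity strip at each stage. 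The smallness hypothesis $\lambda \ll \alpha^2$ is precisely the balance that renders the quadratically small errors summable along the iteration, so that the composition of the successive maps converges; heuristically one power of $\alpha$ is spent on the small divisors and one on the Jacobian estimate controlling the frequency shift.

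Next I would address the regularity and the global character of $\Phi_\lambda$. The iteration produces a conjugacy defined a priori only for $\xi \in A_{\alpha, d}$, with estimates uniform in the parameter; by the Whitney regularity built into the scheme the family is $C^\infty$-Whitney-smooth in $\xi$, and a Whitney extension yields a genuine $C^\infty$ map $\Phi_\lambda$ on all of $\mathbb{T}^2 \times A$, close to the identity, that restricts to the conjugacy on the Diophantine tori while sending a general $\Lambda_\xi$ to $\Lambda_{\xi_\lambda} = \mathbb{T}^2 \times \{\xi_\lambda\}$. Analytic dependence on $\lambda$ follows from treating $\lambda$ as an additional analytic parameter in the scheme, since $p_\lambda$ is affine in $\lambda$ and the homological equation depends analytically on it. The measure statement is then inherited: $\Omega \setminus \Omega_{\alpha, d}$ has measure $\mathcal{O}(\alpha)$, and by non-degeneracy (the diffeomorphism $\omega$ with bounded Jacobian) this transfers to $A \setminus A_{\alpha, d}$, hence to the complement of the union of KAM tori after intersecting with the energy surface.

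The step I expect to be the main obstacle is the iso-energetic bookkeeping in the final assertion, namely that for $\xi \in \Gamma_a$ the persisting torus actually lies in $p_\lambda^{-1}(E) \cap (\mathbb{T}^2 \times \Gamma_{a, \lambda})$. Kolmogorov non-degeneracy preserves frequencies but allows the energy of each torus to drift with $\lambda$, so pinning the tori to the fixed level $E$ requires either passing to the iso-energetic normal form or reparametrizing $\Gamma_a \mapsto \Gamma_{a, \lambda}$ so that the frequency-labelled KAM tori foliate a full-measure portion of the single surface $p_\lambda^{-1}(E)$. Reconciling this energy-surface constraint with the global $C^\infty$ extension, and propagating the $\mathcal{O}(\alpha)$ measure estimate onto $p_\lambda^{-1}(E)$ rather than onto the full phase space, is the delicate point, and is where I would concentrate most of the effort.
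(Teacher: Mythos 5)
Your proposal is sound, and it coincides in substance with the paper's treatment: the paper gives no proof of Theorem~\ref{kam} at all, presenting it as ``combined from the different known versions of the classical KAM theorem'' and delegating the argument entirely to the cited references (P\"oschel 1982, Bost 1986, P\"oschel 2001), whose proofs proceed exactly by your outline --- the parametrized Newton iteration with small divisors controlled by the Diophantine condition \eqref{dn alpha-d dioph}, the $\lambda \ll \alpha^2$ smallness balance, Whitney-smooth dependence on the Cantor family $A_{\alpha,d}$ upgraded by extension to the global $C^\infty$ map $\Phi_\lambda$, analytic dependence on $\lambda$, and transfer of the $\mathcal{O}(\alpha)$ measure estimate through the non-degenerate frequency map. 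The iso-energetic bookkeeping you flag as the main obstacle is indeed the one point the paper's bare citation glosses over (Kolmogorov non-degeneracy alone fixes frequencies, not energies), and it is resolved in the cited literature along the lines you anticipate, by passing to the perturbed energy curve $\Gamma_{a,\lambda}$ rather than insisting on frequency preservation within $p_\lambda^{-1}(E)$.
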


Note that in the previous theorem, we can see $\xi_\lambda$ as a smooth function of $ \xi \in A$ and of the parameter $\lambda$.
From this theorem, we obtain a nowhere dense of large measure of KAM tori,
 on each of which the Hamiltonian flow of the perturbed system $p_\lambda$ is quasi-periodic of constant vector field.


\subsection{General assumptions}   \label{gnrass}

For $h \ll \varepsilon = \mathcal{O}(h^\delta)$, with $0< \delta <1 $, and a small enough parameter $0< \lambda \ll 1$, let $P_{\varepsilon, \lambda}$ be a pseudodifferential operator depending smoothly on  $\varepsilon$ and $ \lambda$ such that
 \begin{equation} \label{a2a}
         P_{\lambda}:=P_{\varepsilon =0, \lambda} \quad \textrm{is formally selfadjoint}.
    \end{equation}
We assume that $P_{\varepsilon, \lambda}$ is classical of order $0$, that is $P_{\varepsilon, \lambda} \in  \Psi^0(m)(M)$ as in Section \ref{sec2.1}, here $m$ is an order function satisfying $m >1$. We assume the total symbol of $P_{\varepsilon, \lambda}$ is a holomorphic Hamiltonian in a tubular neighborhood of $T^*M$.


Let $p_1$ be an analytic bounded function in a tubular neighborhood of $T^*M$, real on the real domain,
with $p_1(x, \xi) =  \mathcal O( m(Re(x,\xi))$ in the case when $M=\mathbb R^2$, and $p_1(x, \xi)=
 \mathcal O( \langle \xi \rangle ^m) $ in the manifold case.

Let $p_{\varepsilon, \lambda}$ be the semiclassical principal symbol of $P_{\varepsilon, \lambda}$.
In Taylor expansion in $\varepsilon$, it can be written in the form
 \beq  \label{p_lambda} p_{\varepsilon, \lambda}=p_\lambda +i \varepsilon q+ \mathcal O (\varepsilon
^2). \eeq
Here we assume that $ q $ is a bounded analytic real-valued function on $T^*M$, and $p_\lambda$ is a nearly integrable system of the form
 \beq   p_\lambda = p+ \lambda p_1, \eeq
with $p$ is a completely integrable system.

We note that $p_\lambda$ is the principal symbol of $P_{\lambda}$.

Let  $$p_{\varepsilon}=p_{\varepsilon, \  \lambda=0}= p +i \varepsilon q+ \mathcal O (\varepsilon
^2).$$
For some $E \in \R$, we assume some general dynamical assumptions on this energy level.
The first we assume the ellipticity condition at infinity as the following.

    When $M=\mathbb R^2$, the ellipticity condition at infinity is
                        \begin{equation}    \label{el con}
                            |p_{\varepsilon}(x,\xi) - E| \geq \frac{1}{C} m(Re(x,\xi)), \mid (x,\xi)\mid \geq C,
                        \end{equation}
     for some $ C> 0 $ large enough.

     When $ M $ is in manifold cases, the elipticity condition at infinity is
            \begin{equation} \label{el con2}
                            |p_{\varepsilon}(x,\xi) -E | \geq \frac{1}{C} \langle \xi \rangle ^m, (x,\xi) \in T^*M, \mid \xi \mid \geq C,
            \end{equation}
  for some $C>0$ large enough.

We note that by the elipticity condition at infinity, the spectrum of $P_{\varepsilon, \lambda}$ in a small neighborhood of $ E$ in $\mathbb C $ is discrete, when $h, \varepsilon$, and $\lambda$ are small enough. Moreover, this spectrum is contained in a horizontal band of size $\varepsilon$,
   $$    |\mathrm{Im} (z)| \leq \mathcal O(\varepsilon ).$$


Because $ p $ is completely integrable, so there exists an integrable system
\begin{equation} \label{F}
F=(p,f): T^*M \rightarrow \mathbb R^2
\end{equation}
Then the space of regular leaves of $F$ is foliated by Liouville invariant tori by Theorem \ref{A-A}.
We assume also that
  \begin{equation}p^{-1}(E) \cap T^*M  \ \textrm{is connected}, \end{equation}
and the energy level $ E $ is regular for $ p $, i.e., $dp \neq 0$ on $p^{-1}(E) \cap T^*M$. We would like to notice that the level set $p^{-1}(E)$ is compact, due to the ellipticity condition at infinity \eqref{el con} or \eqref{el con2}

Then the energy space $p^{-1}(E)$ is decomposed into
a singular foliation:
 \begin{equation} \label{dk18}
 p^{-1}(E) \cap T^*M   = \bigcup_{a \in J} \Lambda_a ,\end{equation}
 where $ J $ is assumed to be a compact interval, or, more generally, a connected graph with a finite number of vertices and of edges.

We denote by $S$ the set of vertices.
For each $a \in J$, $\Lambda_a$ is a connected compact subset invariant with respect to $H_p$.
Moreover, if $a \in J\backslash S$, $\Lambda_a$ are the Liouville tori depending
analytically on $a$. These tori are regular leaves corresponding to regular values of $F$.
Each edge of $ J $ can be identified with a bounded interval of $ \mathbb R $ and we have therefore
a distance on $J$ in the natural way.

We denote $H_p$ the Hamiltonian vector field of $p$, defined by $\sigma (H_p, \cdot)= -dp (\cdot)$.
For each $a \in J$, we define a compact interval in $\mathbb R$:
                            \begin{equation} \label{Q vo cung}
                                    Q_\infty(a)=
                                     \big [ \lim_{T\rightarrow \infty} \inf_{\Lambda_a} Re \langle q \rangle _T,
                                      \lim_{T\rightarrow \infty} \sup_{\Lambda_a} Re \langle q \rangle _T\big],
                            \end{equation}
where $\langle q \rangle _T$, for $T>0$, is the symmetric average time $T$ of $q$ along the
$H_p-$flow, defined by
                           $$
                                 \langle q \rangle _T= \frac{1}{T} \int_{-T/2}^{T/2} q \circ exp(t H_p) dt.
                           $$

For each torus $\Lambda_a$, with $ a  \in J\backslash S$, there exists an angle-action local chart $(V, \kappa)$  on an open neighborhood of $\Lambda_a$ as in \eqref{coor}
such that $\Lambda_a \simeq \Lambda_{\xi_a}$, $\xi_a \in A$, and $p$ becomes a function only of $\xi$,
           \begin{equation} \label{p}
                    p\circ \kappa^{-1} =p(\xi), \ \xi \in A.           \end{equation}
We define the function $\langle q \rangle $-the average of $ q $ on Liouville tori $\Lambda \subset V $,
with respect to the natural Liouville measure on $\Lambda$, denoted by $\langle q \rangle_{\Lambda} $, as the integral
                      \begin{equation} \label{moyenne de q}
                       \langle q \rangle_{\Lambda}= \int_{\Lambda}q .\end{equation}

\begin{rema}
 In the action-angle coordinates $(x,\xi)$ given by \eqref{coor}, we have
 \begin{equation} \label{moyenne2}
 \langle q \rangle_{\Lambda} =  \langle q \rangle_{\Lambda_ \xi}= \langle q \rangle (\xi)=
 \frac{1}{(2\pi)^2}\int_{\mathbb{T}^2}q(x,\xi)dx, \ \xi \in A.
 \end{equation}
In particular, $\langle q \rangle_{\Lambda_a}=\langle q \rangle(\xi_a)$.
\end{rema}

It is true that $\langle q \rangle_{\Lambda_a} $ depends analytically on $a \in J\backslash S$, and
we assume that it can be extended continuously on $J$. Furthermore, we assume that the function $a
\mapsto \langle q \rangle_{\Lambda_a} = \langle q \rangle(\xi_a)$ is not identically constant on
any connected component of $J\backslash S$, and that
\begin{equation}
\textrm{$dp(\xi)$ and $d \langle q \rangle (\xi)$ are $\mathbb R-$linearly independent at $\xi_a$.}
\end{equation}


It is true that the rotation number $a \mapsto  \rho (a):=\rho   (\xi_a)$, where $\rho$ is defined by \eqref{rho}, depends analytically on $a \in J\backslash S$. We assume moreover that
\begin{equation} \label{w}
\textrm{$ \rho$ is not identically constant on any connected
component of $J \backslash S$.}
\end{equation}

\begin{rema}[see Ref. \cite{Hitrik07}]  \label{rem1}
For $a \in J\backslash S$, if $\rho(a) \notin \mathbb{Q}$, that means the frequency $\omega(a)$
is non resonant, then the Hamiltonian flow of $p$  along the torus $\Lambda_a$ is ergodic. Hence the
limit of $\langle q \rangle _T$, when $ T \rightarrow \infty$ exists, and is
equals to $ \langle q \rangle_{\Lambda_a}$. Therefore we
have $$Q_\infty(a)= \{\langle q \rangle_{\Lambda_a}\}.$$
\end{rema}

\subsection{Spectral asymptotics}   \label{end mono}

Let $P_{\varepsilon, \lambda}$ be the operator given in the last Section and we keep all notations as before.
Applying the spectral asymptotic theory Ref. \cite{Hitrik07}, we can give the asymptotics of eigenvalues of $P_{\varepsilon, \lambda}$ located in suitable small windows of the complex plane, which are associated to KAM tori, talked in Section \ref{kams}.

\begin{defi}  \label{dn bonnes valeurs}
        For some $\alpha>0 $ and some $d>0$, we define the set of good values associated with the energy level $E$, denoted by $\mathcal{G}(\alpha,d, E)$, obtained from $\cup_{a \in J}Q_\infty(a)$ by removing the following set of bad values $\mathcal{B}(\alpha,d, E)$:
\beas
\mathcal{B}(\alpha,d, E)& =  & \Bigg ( \bigcup_{dist(a,S) < \alpha} Q_\infty(a) \Bigg )
                                \bigcup \Bigg ( \bigcup_{a \in J\backslash S: \ \omega(\xi_a) \textrm{ is not }
                                    (\alpha,d)- \textrm{Diophantine}} Q_\infty(a)\Bigg )
                                          \\
                  &         &  \bigcup \Bigg( \bigcup_{a \in J\backslash S: \ |d \langle q \rangle(\xi_a) | < \alpha }
                                            Q_\infty(a) \Bigg )
 \bigcup \Bigg ( \bigcup_{a \in J\backslash S: \ |\rho'(a)| < \alpha } Q_\infty(a) \Bigg ) .
\eeas
\end{defi}

\begin{rema} \label{rem2} We have some remarks
         \begin{enumerate}[(i)] \label{peti mesure}
         \item When $d>0$ is kept fixed, the measure of the set of bad values
             $\mathcal{B}(\alpha,d, E)$ in $\cup_{a \in J}Q_\infty(a)$ (and in $\langle q
             \rangle_{\Lambda_a} (J) $) is $\mathcal O
             (\alpha)$, when $\alpha >0$ is small enough, provided that the measure of
                                    \begin{equation}  \label{a condition}
                                            \Bigg ( \bigcup_{a \in
                                             J\backslash S: \ \rho(a) \ \in \ \mathbb Q } Q_\infty(a)
                                             \Bigg ) \bigcup \Bigg ( \bigcup_{a \ \in \ S} Q_\infty(a)
                                            \Bigg )
                                    \end{equation}
             is sufficiently small, depending on $\alpha$ (see Ref. \cite{Hitrik07}).
             \item Let $G \in \mathcal{G}(\alpha,d, E) $ be a good value, then by Definition of
                 $\mathcal{B}(\alpha,d, E) $ and Remark \ref{rem1}, there are a finite number
                 of corresponding $(\alpha, d)-$Diophantine tori
                 $\Lambda_{a_1},\ldots,\Lambda_{a_L}$, with $L \in \mathbb N^*$ and $\{
                 a_1,\ldots,a_L\} \subset J\setminus S $, in the energy space $p^{-1}(E) \cap
                 T^*M$, such that the pre-image
                                         $$\langle q \rangle ^{-1}(G)= \{\Lambda_{a_1},\ldots, \Lambda_{a_L} \}.$$
             In this way, when $G $ varies in $\mathcal{G}(\alpha,d, E) $, we obtain a family of large measure
             of $(\alpha, d)-$Diophantine invariant tori  in the phase space
                 satisfying $ \{p= E, \langle q \rangle = G \}$.
         \end{enumerate}
\end{rema}

Now let $G \in  \mathcal{G}(\alpha,d, E)$ be a good value. As in Remark \ref{rem2}, there exists
$L$ elements in pre-image of $G$ by $\langle q \rangle$.

We shall assume that $L=1$. Then we can write
\begin{equation} \label{preimage}
     \langle q \rangle ^{-1}(G)=\Lambda_ a \subset  p^{-1}(E) \cap T^*M , \ a \in J\setminus S.
    \end{equation}
Note that this hypothesis can be achieved if we assume further that the map $(p, \langle q \rangle)$ has connected fibres.
We denote also $a=(E, G) $ and observe that the corresponding torus $\Lambda_a  \simeq \Lambda_{ \xi_a}$, with $\xi_a \in
\Gamma_a$, is a Diophantine torus.

Now we assume that $p$ is locally non-degenerate in a neighborhood of $\Lambda_a$, according Definition \ref{Kol}, and
moreover $0 < \lambda   \ll \alpha^2  $.

Then by Theorem \ref{kam}, there exists a nowhere dense
set of large measure of KAM tori close to $\Lambda_{ \xi_a}$, $\Lambda_{\xi_ \lambda} = \mathbb T^2 \times \{\xi_\lambda \}$, with $\xi_\lambda  \in A_{\alpha, d, \lambda}$, on each of which the $H_{p_\lambda}-$flow is quasi periodic of a Diophantine constant frequency, denoted by
$\omega_{\lambda}(\xi_\lambda)$. Therefore, over these KAM tori $\Lambda_{\xi_ \lambda}$, $p_\lambda$ becomes a function of only $\xi_\lambda$,
$$p_\lambda= p_\lambda (\xi_\lambda), \xi_\lambda  \in A_{\alpha, d, \lambda}. $$


In particular, when $\xi \in \Gamma_a \cap A_{\alpha, d}$, then $\Lambda_{\xi_ \lambda} \subset
p_\lambda ^{-1}(E) \cap \ \mathbb T^2 \times  (\Gamma_{a, \lambda} \cap A_{\alpha, d, \lambda} )$, and with $\xi_ \lambda \in \Gamma_{a, \lambda} \cap A_{\alpha, d, \lambda} $.

By \eqref{moyenne de q}, we define locally an analytic function $\langle q \rangle
_{\Lambda_{\xi_ \lambda} }$ of $\xi_\lambda  \in A_\lambda$, denoted by $\langle q \rangle (\xi_
\lambda )$, obtained by averaging $q$ over the tori $\Lambda_{\xi_ \lambda} $.
Then we have, in $C^1-$sense in $\xi_\lambda  \in A_\lambda$,
 $$\langle q \rangle (\xi_ \lambda ) = \langle q \rangle _{\Lambda_{\xi_ \lambda} }   \rightarrow
 \langle q \rangle _{\Lambda_{\xi } }=  \langle q \rangle (\xi), \ \textrm{as} \ \lambda \rightarrow 0 .$$
Hence, with regard to the properties of the good value $G$, for every $\xi_\lambda \in \Gamma_{a, \lambda}$ and $\lambda$ small enough, it is true that
\beq
\label{dk1} \big | d \langle q \rangle (\xi_ \lambda ) \big | \geq \frac{\alpha}{2}. \eeq Notice that
we have also the differentials of $p_\lambda(\xi_ \lambda )$ and $\langle q \rangle (\xi_ \lambda
)$ in every $\xi_\lambda \in \Gamma_{a, \lambda} \cap A_{\alpha, d, \lambda} $ are $\mathbb R
-$linearly independent: \beq \label{dk2} \omega_{\lambda}(\xi_\lambda) \wedge d \langle q \rangle
(\xi_ \lambda ) \neq 0. \eeq
Let us define the set of (new) \textit{good values} for $P_{\varepsilon, \lambda}$,
\beq \label{K}
\mathcal{G_\lambda}(\alpha, E, G)= \bigcup_{\xi_\lambda \in \ \Gamma_{a, \lambda} \ \bigcap \ A_{\alpha, d,
\lambda}} \langle q \rangle (\xi_ \lambda ) = \bigcup \{\langle q \rangle (\xi_ \lambda ): \ \xi
\in \Gamma_a \cap A_{\alpha, d} \} .\eeq
For a fixed energy level $E$, near $G$ this set is a slight perturbation of the set $ \mathcal{G}(\alpha,d, E)$.
It is true that the measure of the complement of
$\mathcal{G_\lambda}(\alpha, E, G)$ in $\bigcup_{\xi_\lambda \in \Gamma_{a, \lambda}} \langle q
\rangle (\xi_ \lambda ) $ is of order $\mathcal{O}(\alpha)$, when $\alpha$ is small and
$d$ is kept fixed.

For any $K \in \mathcal{G_\lambda}(\alpha, E, G) $, then there exists a unique $\xi \in \Gamma_a \cap A_{\alpha,
d} $ such that $\langle q \rangle (\xi_ \lambda ) = \langle q \rangle _{\Lambda_{\xi_ \lambda} }= K$, and the corresponding KAM torus $\Lambda_{\xi_ \lambda} \subset p_\lambda ^{-1}(E)$ is still Diophantine $H_{p_\lambda}-$invariant Lagrangian torus. Moreover, the $H_{p_\lambda}-$flow on $\Lambda_{\xi_
\lambda} $ is quasi-periodic of the Diophantine constant frequency $\omega_{\lambda}(\xi_\lambda)$,
satisfying \eqref{dk1} and \eqref{dk2}. Therefore, the general assumptions on the dynamic of $H_{p_\lambda}$ allow to carry out the Birkhoff normalization for $P_{\varepsilon, \lambda}$ near $\Lambda_{\xi_ \lambda} $, that leads to spectral asymptotic results from $h-$depending complex window.

We define in the horizontal band of size
$\varepsilon$ a suitable window of size
$\mathcal{O}(h^\delta) \times \mathcal{O}(\varepsilon h^\delta)$, around the \textit{good center $E+ i\varepsilon K$},
called \textit{good rectangle},
 \begin{equation} \label{cua so}
                        R^{(E, K)}(\varepsilon,h)
                                    = (E+i \varepsilon \ K)+  \Big[-\frac{h^\delta}{\mathcal{O}(1)},\frac{h^\delta}{\mathcal{O}(1)} \Big]
                                    +i \varepsilon \Big [ -\frac{h^\delta}{\mathcal{O}(1)}, + \frac{h^\delta}{\mathcal{O}(1)} \Big ].
 \end{equation}

\begin{rema}
The crucial idea for the Birkhoff normalization (Refs. \cite{Hitrik07}, \cite{Charles08}, and \cite{Pov}) is to use consecutively canonical transformations near a Diophantine invariant torus (like $\Lambda_{\xi_ \lambda}$), to conjugate the operator $P_{\varepsilon, \lambda}$ to a new operator, whose total symbol is independent of angle variables $x$, and homogeneous polynomial to high order in $ (\xi, h, \varepsilon)$. 
The dynamical assumptions and the Diophantine condition (for the torus $\Lambda_{\xi_ \lambda} $) is indispensable for this construction.
Therefore we obtains spectral asymptotics for the operator $P_{\varepsilon, \lambda}$, that are similar results in integrable cases of Ref. \cite{QS16.1}.
\end{rema}

We recall here some notations used for the statement of results.

Let $\Lambda$ be an invariant Lagrangian torus and suppose that $\kappa_\infty$ is an
action-angle local chart such that $\Lambda \simeq \Lambda_{\xi}$ (like \eqref{coor}). The fundamental cycles $(\gamma_1,\gamma_2)$ of $\Lambda$ associated to $\kappa_\infty$ are defined by
            $$ \gamma_j= \kappa_\infty^{-1} ( \{ (x, \xi) \in T^* \mathbb{T}^2: x_j=0 \}) , \ j=1,2. $$
Then the action integrals of these cycles are defined by
\begin{equation} \label{act}
    S= (S_1, S_2) = \left( \int_{\gamma_1} \theta , \int_{\gamma_2} \theta  \right) \in \R^2,
\end{equation}
where $\theta$ is the Liouville $1-$form on $T^*M$.

\begin{defi}[Refs. \cite{lectureColin}, and \cite{Rob93}]   \label{ind}
Let $ E $ be a symplectic space and let $ \Lambda (E) $ be his Lagrangian Grassmannian (which is
set of all Lagrangian subspaces of $ E $). We consider a bundle $ B $ in $ E $ over the circle or a
compact interval provided with a Lagrangian subbundle called vertical. Let $ \lambda (t) $ be a
section of $ \Lambda (B) $ which is transverse to the vertical edges of the interval in the case
where the base is an interval. The Maslov index of $ \lambda (t) $ is the intersection number of
this curve with the singular cycle of Lagrangians which do not cut transversely the vertical
subbundle.
\end{defi}

The Maslov index appears in the statement of spectral asymptotic results. In our work it is treated as a standard constant.


Using a result from Ref. \cite{Hitrik07}, Section 7.3, for the energy level $0$ , we can have the following result for any energy level $E$ in the range of $p$.
\begin{theo}  \label{v}
For $E \in \R$. Let $P_{\varepsilon, \lambda}$ be an operator satisfying all general assumptions of Section \ref{gnrass}, \eqref{a2a}- \eqref{w}.
Let $G$ be a good value according Def. \ref{dn bonnes valeurs} with $\alpha > 0$ small enough and $d>0$ fixed, such that the condition \eqref{preimage} is true.
We assume that $ \lambda   \ll \alpha^2$, and $p$ is locally non-degenerate, according Definition \ref{Kol}. Then we can define the set of good values $\mathcal{G_\lambda}(\alpha, E, G)$ by \eqref{K}.

For each $K \in \mathcal{G_\lambda}(\alpha, E, G) $, there exists a unique $H_{p_\lambda}-$ invariant KAM torus $\Lambda_{\lambda}=  p_\lambda ^{-1}(E) \cap \langle q \rangle ^{-1}(K) $, and a canonical transformation
\begin{equation} \label{kappa vc}
\kappa_{\infty}=(x, \xi): \textrm{neigh} \ (\Lambda_{\lambda}, T^*M)\rightarrow
\textrm{neigh} \ (\Lambda_{\xi_\lambda}, T^*\mathbb T^2) ,
\end{equation}
mapping $\Lambda_{\lambda}$ to $\Lambda_{\xi_\lambda}= \mathbb T^2 \times \{\xi_\lambda \} $, such that
 \begin{equation} \label{lea. sym}
  p_\lambda \circ \kappa_{\infty}^{-1}= p_\lambda (x, \xi)=  p_{\lambda, \infty }(\xi) + \mathcal{O}(\xi-\xi_\lambda )^\infty,
 \end{equation}
where $p_{\lambda, \infty }$ is a smooth function depending only on $\xi$, and admits the Taylor expansion at $\xi_\lambda$ of the form
\begin{equation} \label{Taylor}
p_{\lambda, \infty }(\xi)= E+ \omega _{\lambda}(\xi_\lambda)\cdot (\xi-\xi_\lambda ) + \mathcal{O}(\xi-\xi_\lambda )^2.
\end{equation}
Let $\eta \in \mathbb Z^2$ be the Maslov indices (see Def. \ref{ind}) and $S \in \mathbb R^2$ be the action integrals, defined by \eqref{act} of
the fundamental cycles of $\Lambda_{\xi_ \lambda} $, suitably with the canonical transformation \eqref{kappa vc}.

Then all eigenvalues $\mu$ of
$P_{\varepsilon, \lambda}$ in the good rectangle $R^{(E, K)}(\varepsilon,h)$, defined by \eqref{cua so}, are given as the image of a portion of $h \mathbb Z^2$, modulo $\mathcal O(h^\infty)$, by a smooth function denoted by $P_\lambda (\xi, \varepsilon;h)$ of $\xi$ in a neighborhood of $(\xi_\lambda, \mathbb R^2)$ and $\varepsilon, h$ in neighborhoods of $(0, \mathbb R)$:
 \begin{equation} \label{eigenvalues-with lambda}
   \sigma(P_{\varepsilon, \lambda}) \cap R^{(E, K)}(\varepsilon,h) \ni \mu =
   P_\lambda \Big( \xi_\lambda+ h(k-\frac{\eta}{4})-\frac{S}{2 \pi};\varepsilon, h\Big) + \mathcal O(h^\infty), \  k \in \mathbb Z^2.
  \end{equation}
Moreover, $P_\lambda $ is a real valued function for $\varepsilon =0 $, admits a polynomial asymptotic expansion in $(\xi, \varepsilon, h)$ for the $C^\infty-$ topology of the form
    \begin{equation} \label{symbole normal}
        P_\lambda ( \xi; \varepsilon, h) \sim    \sum_{\alpha, j,k} C_{\alpha j k} \ \xi^\alpha \varepsilon^ j h^k.
     \end{equation}
In particular, the $h-$leading term of $P_\lambda$ is of the form
\beq \label{prin normal-lambda} p_{0, \lambda}(\xi, \varepsilon) =
p_{\lambda, \infty }(\xi)+ i \varepsilon \langle q \rangle (\xi) + \mathcal O(\varepsilon ^2), \eeq
where $p_{\lambda, \infty }(\xi)$ is given by \eqref{lea. sym}-\eqref{Taylor}, and $ \langle q \rangle (\xi)$ is the expression of the average of $q$ over a torus $\Lambda_\xi$ close to $\Lambda_{\xi_\lambda} $, in the previous coordinates, see \eqref{moyenne2}.
\end{theo}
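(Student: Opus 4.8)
The plan is to deduce the statement from the spectral asymptotics of Hitrik--Sjöstrand (Ref. \cite{Hitrik07}, Section 7.3), established there at the energy level $0$ for an unperturbed symbol that is integrable near the relevant energy surface, by two reductions: a shift of the energy level, and a verification that near each KAM torus produced by Theorem \ref{kam} the classical data required by that theory are met in our nearly integrable situation. First I would reduce to energy $0$: replacing $P_{\varepsilon, \lambda}$ by $P_{\varepsilon, \lambda} - E$ shifts the spectrum by $-E$ and turns the energy surface $p_\lambda^{-1}(E)$ into the zero level of the shifted real part, so the statement at level $E$ follows from the level $0$ statement applied to $P_{\varepsilon, \lambda} - E$, after translating back in \eqref{eigenvalues-with lambda}. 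The ellipticity at infinity \eqref{el con}--\eqref{el con2} guarantees discreteness of the spectrum near $E$ throughout.

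Next, for a fixed good value $K \in \mathcal{G}_\lambda(\alpha, E, G)$, Theorem \ref{kam} furnishes the unique $(\alpha,d)$-Diophantine $H_{p_\lambda}$-invariant torus $\Lambda_\lambda = p_\lambda^{-1}(E) \cap \langle q \rangle^{-1}(K)$ together with the bounds \eqref{dk1}--\eqref{dk2}. On a neighbourhood of a \emph{single} Diophantine invariant torus one has the classical Birkhoff normalization (as recalled in the Remark following \eqref{cua so}, cf. Refs. \cite{Hitrik07}, \cite{Charles08}, \cite{Pov}): solving the successive homological equations $H_p u = g - \langle g \rangle$, which is possible precisely because $\omega_\lambda(\xi_\lambda)$ is Diophantine, produces the canonical chart $\kappa_\infty$ of \eqref{kappa vc} and the function $p_{\lambda, \infty}$ flattening $p_\lambda$ to infinite order along the torus, giving \eqref{lea. sym}--\eqref{Taylor}. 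At the operator level I would then conjugate $P_{\varepsilon, \lambda}$ by the $h$-Fourier integral operators quantizing these transformations to a normal form operator whose total symbol is independent of the angle $x$ and is, to high order, polynomial in $(\xi - \xi_\lambda, h, \varepsilon)$; the first-order-in-$\varepsilon$ non-selfadjoint part is averaged along the $H_p$-flow, the Diophantine condition and \eqref{dk2} making this average $\langle q \rangle(\xi)$ well defined and smooth. This yields the symbol $P_\lambda(\xi; \varepsilon, h)$ with the expansion \eqref{symbole normal} and the leading term \eqref{prin normal-lambda}, real at $\varepsilon = 0$ since $P_\lambda$ is selfadjoint there.

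Finally, once the operator is in normal form, i.e. a function of the commuting action operators, its spectrum in the good rectangle \eqref{cua so} is read off by the standard Bohr--Sommerfeld/WKB quantization of the Lagrangian torus: the joint quantized actions lie on the shifted lattice $\xi_\lambda + h(k - \eta/4) - S/(2\pi)$, $k \in \mathbb{Z}^2$, where the shift is dictated by the Maslov index $\eta$ (Def. \ref{ind}) and the action integrals $S$ of \eqref{act}; evaluating the normal-form symbol there gives \eqref{eigenvalues-with lambda}, the $\mathcal{O}(h^\infty)$ remainder coming from the infinite-order flattening in \eqref{lea. sym}.

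The main obstacle is not any single step but the interface between the classical KAM conjugation $\Phi_\lambda$ of Theorem \ref{kam} and the semiclassical Birkhoff normalization. One must verify that, as $K$ ranges over $\mathcal{G}_\lambda(\alpha, E, G)$, the Diophantine constants and the lower bounds \eqref{dk1}--\eqref{dk2} are \emph{uniform}, and that all constructions depend smoothly on $\lambda$ with the correct limit as $\lambda \to 0$ (so that $\langle q \rangle(\xi_\lambda) \to \langle q \rangle(\xi)$), so that the Hitrik--Sjöstrand construction applies with estimates uniform in the relevant parameters in the regime $h \ll \varepsilon = \mathcal{O}(h^\delta)$. Since $p_\lambda$ is only integrable to infinite order at $\Lambda_\lambda$, and not in an open neighbourhood, care is also needed to confirm that this infinite-order flatness alone suffices to drive the $\mathcal{O}(h^\infty)$ accuracy of the quantization \eqref{eigenvalues-with lambda}.
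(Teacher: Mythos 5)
Your proposal follows essentially the same route as the paper: the paper deduces Theorem \ref{v} by verifying, via Theorem \ref{kam} and the bounds \eqref{dk1}--\eqref{dk2}, that each KAM torus meets the dynamical and Diophantine hypotheses of the spectral asymptotic theory of Ref. \cite{Hitrik07} (Section 7.3, stated there at energy level $0$ and transported to an arbitrary level $E$), and then invokes that theory, whose internal mechanism is precisely the Birkhoff normalization and Bohr--Sommerfeld quantization you sketch. The only difference is one of exposition: the paper treats the normal-form construction, the $\mathcal{O}(h^\infty)$ quantization, and the uniformity issues you flag as a black box by citation, whereas you unpack them.
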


\begin{rema} \label{lov}
We notice that in the theorem the canonical transformation $\kappa_{\infty} $ in \eqref{kappa vc} satisfying \eqref{lea. sym} and \eqref{Taylor}, is locally valid for all KAM tori $\Lambda_{\xi_\lambda}$ in a neighborhood of the Diophantine Liouville $\Lambda_a$ given in \eqref{preimage}, here $a=(E,G)$.
\end{rema}

\begin{rema} \label{diffeo}
We can show that the function $P_\lambda $ in the above theorem is a local diffeomorphism from a neighborhood of $(\xi_a, \mathbb R^2)$
into its image, that is in a $\mathcal O(\varepsilon)-$ horizontal band (see Ref. \cite{QS14}). Therefore the eigenvalues of $P_{\varepsilon, \lambda}$ in a good rectangle form a deformed $h-$\textit{lattice}.
Moreover, the lattice has a horizontal spacing $ \mathcal O(h)$
and a vertical spacing $ \mathcal O(\varepsilon h)$.
\end{rema}


\subsection{Asymptotic pseudo-lattice structure of the spectrum}
Assume that $p$ is a completely integrable Hamiltonian system. Let $U$ be an open subset with compact closure of regular values of the moment map $F=(p,f)$ given in \eqref{F}, and let $X= F^{-1}(U)$.
Then we can cover $X$ by an atlas of angle-action charts $\{( V^c, \kappa^c ) \}_{c \in U}$ by Theorem \ref{A-A}.

\begin{defi} \label{gnd}
We say that $p$ is globally non-degenerate (on $X$) if for such an atlas of $X$, $p$ is locally non-degenerate on every $V^c, c \in U$, according Definition \ref{Kol}.
\end{defi}

With general assumption on the dynamics as in Section \ref{gnrass} and the globally non-degeneracy, we shall show the main result that the spectrum of $P_{\varepsilon, \lambda}$ on the domain $U(\varepsilon)$ satisfies all hypotheses of an asymptotic pseudo-lattice, according Definition \ref{pseu-lattice}.

\begin{theo} \label{mtheo2}
Let $p$ be a completely integrable Hamiltonian and the set $U$ as above. We assume that $p$ is globally non-degenerate, according Def. \ref{gnd}.
For $h \ll \varepsilon = \mathcal{O}(h^\delta)$, with $0< \delta <1 $, and a small enough parameter $0< \lambda \ll 1$,
we consider an operator $P_{\varepsilon, \lambda}$ satisfying all general assumptions of Section \ref{gnrass}, \eqref{a2a}- \eqref{w}, for any energy level $E$ in the projection of $U$ on horizontal axis.
Let $d >0$ fixed, $\alpha >0$ small enough, then we assume that $0 < \lambda   \ll \alpha ^2 $.
We assume further that the map $(p, \langle q \rangle )$ on $X $ has connected fibers, where $\langle q \rangle $ is the average of $q$ on tori, given in \eqref{moyenne de q}.

We define the set $U(\varepsilon)$ as in \eqref{elp}. Then $(\sigma( P_{\varepsilon, \lambda}), U(\varepsilon) )$ is an asymptotic pseudo-lattice.
\end{theo}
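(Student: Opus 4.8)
The plan is to verify each clause of Definition \ref{pseu-lattice} directly from the spectral asymptotics of Theorem \ref{v}, working chart by chart over the angle-action atlas $\{(V^c, \kappa^c)\}_{c \in U}$ furnished by Theorem \ref{A-A}. First I would fix $\alpha > 0$ small and $d > 0$, and declare the set of good values $\mathcal{G}(\alpha)$ to be the union over energy levels $E$ (in the horizontal projection of $U$) of the sets $\mathcal{G_\lambda}(\alpha, E, G)$ built in \eqref{K}: a point of $U$ is good when its two coordinates $(E, K)$ arise as $(E, \langle q \rangle(\xi_\lambda))$ for a Diophantine KAM torus. The connected-fibres hypothesis on $(p, \langle q \rangle)$ guarantees that the pre-image count is $L = 1$, so that \eqref{preimage} holds locally and the correspondence between good values and KAM tori is one-to-one, as required to invoke Theorem \ref{v}. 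Given $c \in U$, I would take $U^c$ to be the image under $F$ of the chart $V^c$ on which $p$ is locally non-degenerate, which is legitimate by the global non-degeneracy hypothesis (Definition \ref{gnd}).

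Next, for a good value $a = (E, K) \in U^c \cap \mathcal{G}(\alpha)$ I would take the adapted good rectangle to be exactly $R^{(E, K)}(\varepsilon,h)$ of \eqref{cua so}, centred at the good centre $E + i\varepsilon K$, and produce the $h$-chart $f$ by inverting the lattice formula \eqref{eigenvalues-with lambda}. Since $P_\lambda(\cdot;\varepsilon,h)$ is a local diffeomorphism onto a horizontal band (Remark \ref{diffeo}), its local inverse $g := P_\lambda^{-1}$ is well defined, and setting
\[
f(\mu, \varepsilon; h) = g(\mu; \varepsilon, h) - \xi_\lambda + \tfrac{h}{4}\eta + \tfrac{1}{2\pi}S
\]
turns \eqref{eigenvalues-with lambda} into $f(\mu,\varepsilon;h) \in h\mathbb{Z}^2 + \mathcal{O}(h^\infty)$, which is precisely the semi-cartesian condition \eqref{semi-cart}; $f$ is a local diffeomorphism because $g$ is and the remaining terms are constant shifts. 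By Remark \ref{lov} the canonical transformation, hence the single function $P_\lambda$, is valid uniformly for all KAM tori near $\Lambda_a$, so the charts attached to different good values inside the same $U^c$ differ only by the affine shifts $\xi_\lambda, \eta, S$.

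The main obstacle is the last and most rigid clause of Definition \ref{pseu-lattice}: that $\widetilde{f} := f \circ \chi$ admits an asymptotic expansion in $(\varepsilon, \frac{h}{\varepsilon})$ for the $C^\infty$-topology, uniformly in $h, \varepsilon$, with leading term $\widetilde{f}_0$ a diffeomorphism that is independent of $\alpha$, independent of the chosen good value $a$, and defined on the whole of $U^c$. Here I would feed the polynomial expansion \eqref{symbole normal} of $P_\lambda$ together with its $h$-leading term \eqref{prin normal-lambda}, namely $p_{0,\lambda}(\xi,\varepsilon) = p_{\lambda,\infty}(\xi) + i\varepsilon\langle q \rangle(\xi) + \mathcal{O}(\varepsilon^2)$, into the rescaling $\chi$ of \eqref{chi}, which absorbs the factor $\varepsilon$ in the imaginary direction; after inversion the expansion variables organise themselves into $(\varepsilon, \frac{h}{\varepsilon})$, exactly as in the integrable computation of Ref. \cite{QS16.1} and the abstract inversion lemma of Ref. \cite{QS14}. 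The leading term $\widetilde{f}_0$ comes out as the inverse of the map $\xi \mapsto (p_{\lambda,\infty}(\xi), \langle q \rangle(\xi))$, which is a genuine diffeomorphism precisely because of the local non-degeneracy together with the transversality \eqref{dk1}--\eqref{dk2}; since $p_{\lambda,\infty}$ and $\langle q \rangle$ are analytic functions of the action $\xi$ over all of the chart $A = A^c$, this leading diffeomorphism extends over the whole $U^c$ and does not see the constant affine shifts (which drop out upon differentiation), giving the required independence of $a$ and of $\alpha$.

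Finally I would collect the measure estimate: by Remark \ref{rem2}(i) the bad values $\mathcal{B}(\alpha,d,E)$ have measure $\mathcal{O}(\alpha)$ in each fibre, and integrating over $E$ the complement ${}^{C}\mathcal{G}(\alpha)$ acquires density $\mathcal{O}(\alpha)$ in the sense demanded by Definition \ref{pseu-lattice}; this rests on the Diophantine sets $\Omega_{\alpha,d}$, $A_{\alpha,d}$ having complements of measure $\mathcal{O}(\alpha)$ and on the auxiliary exclusions $|d\langle q \rangle| < \alpha$ and $|\rho'| < \alpha$ cutting out sets of comparable size. Assembling the local pseudo-charts over the atlas then exhibits $(\sigma(P_{\varepsilon, \lambda}), U(\varepsilon))$ as an asymptotic pseudo-lattice.
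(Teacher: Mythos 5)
Your construction follows the paper's own proof almost step for step: the same good values $(E,K)$ with $K \in \mathcal{G_\lambda}(\alpha,E,G)$ and $L=1$ secured by the connected-fibre hypothesis, the same good rectangles \eqref{cua so}, the same chart $f = P_\lambda^{-1}(\mu) - \xi_\lambda + \frac{h\eta}{4} + \frac{S}{2\pi}$ obtained by inverting \eqref{eigenvalues-with lambda} via Remark \ref{diffeo}, the same appeal to Proposition \ref{D.A inverse} (the inversion lemma of Ref. \cite{QS14}) to produce the expansion in $(\varepsilon, \frac{h}{\varepsilon})$ after composing with $\chi$, to Remark \ref{lov} for the validity of the normal form on all of $U^c$, and to Remark \ref{rem2} for the measure estimate on bad values. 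So the route is the same; however, there is one genuine gap in your verification of the last, most rigid clause of Definition \ref{pseu-lattice}.

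You identify the leading term $\widetilde{f}_0$ with $\psi^{-1}$, where $\psi(\xi) = \left( p_{\lambda,\infty}(\xi), \langle q \rangle(\xi) \right)$, and dismiss the shifts on the grounds that they ``drop out upon differentiation.'' They do not drop out of $\widetilde{f}_0$ itself: the shift $\frac{S}{2\pi} - \xi_\lambda$ is $\mathcal{O}(1)$ (unlike $h\frac{\eta}{4}$, which is of higher order in the $(\varepsilon,\frac{h}{\varepsilon})$ expansion since $h = \varepsilon \cdot \frac{h}{\varepsilon}$), so the true leading term is $\widetilde{f}_0 = \big( \frac{S}{2\pi} - \xi_\lambda \big) + \psi^{-1}$, and a priori both $S$ and $\xi_\lambda$ change when you pass to a different good value $a$, i.e. to a different KAM torus. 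Definition \ref{pseu-lattice} demands that $\widetilde{f}_0$ itself --- not merely $d\widetilde{f}_0$ --- be independent of the selected good value $a \in U^c$; this is exactly what makes the $\widetilde{f}_{j,0}$ genuine charts of $U$ in Proposition \ref{dl}, on which the whole monodromy construction rests. The paper closes this point with the classical fact \eqref{pt action}: the difference $\tau_c := \frac{S}{2\pi} - \xi_\lambda$ is \emph{locally constant} in $c$, i.e. the same constant vector for every good value in $U^c$ (action integrals equal $2\pi$ times the action coordinates up to a locally constant shift), whence $\widetilde{f}_0 = \tau_c + \psi^{-1}$ is well defined on $U^c$ and independent of $a$ and of $\alpha$. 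Your argument needs this fact inserted; with it, the rest of your verification goes through as in the paper.
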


\begin{proof}[Proof of Theorem \ref{mtheo2}]

Let an any point $c \in U$. For any point $a=(E,G) $ in a small enough neighborhood $U^c $ of $c$ such that $G$ is a good value, i.e., $G \in \mathcal{G}(\alpha,d,E)$, according Def. \ref{dn bonnes valeurs}. With hypotheses of the theorem, the condition \eqref{preimage} holds ($L=1$).
For any $K \in \mathcal{G_\lambda}(\alpha, E, G)$ such that $(E,K) \in  U^c$.
Then there exists an unique KAM torus $\Lambda_{\xi_
\lambda}=  p_\lambda ^{-1}(E) \cap \langle q \rangle ^{-1}(K) $ of $H_{p_\lambda}-$flow, and a canonical transformation $ \kappa_{\infty} $ near $\Lambda_{\xi_\lambda}$ as in \eqref{kappa vc}.
It is clear here that all hypotheses of Theorem \ref{v} are satisfied and therefore we can have asymptotic expansions for
all eigenvalues $\mu$ of $P_{\varepsilon, \lambda}$ in a good rectangle $R^{(E, K)}(\varepsilon,h)$ of the from \eqref{cua so}, given by \eqref{eigenvalues-with lambda}:
$$
   \mu =
   P_\lambda \Big( \xi_\lambda+ h(k-\frac{\eta}{4})-\frac{S}{2 \pi};\varepsilon, h\Big) + \mathcal O(h^\infty), \  k \in \mathbb Z^2.
$$

We remark also a classical result that
 \begin{equation}  \label{pt action}
                             \frac{S}{2 \pi}- \xi_\lambda := \tau_c \in \mathbb R^2,
 \end{equation}
 is locally constant in $c \in U$ (see Ref. \cite{QS14}).
Then from Remark \ref{diffeo}, the equation \eqref{eigenvalues-with lambda}, with the aid of \eqref{prin normal-lambda}, provides a smooth local diffeomorphism near $E+i \varepsilon K \in \mathbb C$, denoted by $f_\lambda$, that sends $\mu$ to $hk \in h \mathbb Z^2 $, modulo $\mathcal O(h^\infty)$, of the form
 \begin{eqnarray}  \label{new hk}
              f_\lambda=f_\lambda(\mu, \varepsilon;h) & =&  \frac{S}{2 \pi}- \xi_\lambda + h \frac{\eta}{4}+  P_\lambda^{-1}(\mu)
                  \\
                      &= & \tau_c + h \frac{\eta}{4}+ P_\lambda^{-1}(\mu)
                       \nonumber \\
                \sigma(P_{\varepsilon, \lambda}) \cap R^{(E, K)}(\varepsilon,h) \ni  \mu & \mapsto &
                        f_\lambda(\mu,\varepsilon; h) \in h \mathbb Z^2 +\mathcal O(h^\infty).  \nonumber
 \end{eqnarray}

 Let $\widetilde{f_\lambda}= f_\lambda \circ \chi$, where $\chi$ given by \eqref{chi}. We have
\begin{equation}    \label{f tilde}
\widetilde{f_\lambda}=  \tau_c + h \frac{\eta}{4}+ P_\lambda^{-1} \circ \chi .
\end{equation}
Due to the fact that $P_\lambda$ admits an asymptotic expansion in $(\xi, \varepsilon, h)$ of the form \eqref{symbole normal} with $h-$leading term \eqref{prin normal-lambda}, then $\widetilde{f_\lambda}$ admits an
asymptotic expansion in $(\xi, \varepsilon,\frac{h}{\varepsilon})$, by Proposition \ref{D.A inverse} (see below).
In the reduced form, the asymptotic expansion in $(\varepsilon,\frac{h}{\varepsilon})$ of $\widetilde{f_\lambda}$, for the $C ^\infty-$topology in a neighborhood of $E+i K$, satisfies $$\widetilde{f}_{\lambda}= \widetilde{f}_{ \lambda,  0} + \mathcal
O(\varepsilon, \frac{h}{\varepsilon}), $$ uniformly for $h, \varepsilon$ small and $h \ll
\varepsilon $.
Here $\widetilde{f}_{\lambda,0}$ is the leading term of $\widetilde{f}_{\lambda}$, and
\begin{equation}   \label{z}
  \widetilde{f}_{\lambda,  0}= \tau_c+ \psi^{-1},
\end{equation}
with \begin{equation} \label{psi}
 \psi(\xi) := ( p_{\lambda , \infty } (\xi) , \langle q \rangle (\xi)  ),
\end{equation}
for $\xi$ in a neighborhood of $(\xi_\lambda, \mathbb R^2)$.

The Remark \ref{lov} allows us to confirm that the leading term $\widetilde{f_\lambda}_0$ is a local diffeomorphism, well defined on
$U^c$. It is locally valid for any good rectangle $R^{(E, K)}(\varepsilon,h) \subset U^c ( \varepsilon )$.
Hence $f_\lambda$ is exactly a $h-$chart of $\sigma(P_{\varepsilon, \lambda})$ on a good rectangle.

We note that the set of good values is of large measure. Then a family of such these $h-$charts with locally leading term satisfying \eqref{z} forms a local pseudo-chart for $\sigma(P_{\varepsilon, \lambda})$ on $U^c(\varepsilon )$.
The above construction ensures that $(\sigma(P_\varepsilon), U(\varepsilon) )$ is an asymptotic pseudo-lattice.
\end{proof}


\begin{prop}(Ref. \cite{QS14})  \label{D.A inverse}
            Let $\widehat{P}= \widehat{P}(\xi; X)$ be a complex-valued smooth function of $\xi$
            near $0 \in \mathbb R^2$ and $X$ near $0 \in \mathbb R^n$.
            Assume that $\widehat{P}$ admits an asymptotic expansion in $X$ near $0$ of the form
                $$\widehat{P}(\xi; X) \sim  \sum_\alpha C_\alpha(\xi) X^\alpha,$$
            with $C_\alpha(\xi)$ are smooth functions and $C_0(\xi):=\widehat{P}_0(\xi) $ is a local diffeomorphism near $\xi=0$.

            Then, for $\mid X \mid $ small enough, $\widehat{P}$ is also a smooth local diffeomorphism near $\xi=0$ and its inverse admits an asymptotic expansion in $X$ near $0 $ whose the first term is
            $(\widehat{P}_0)^{-1}$.
\end{prop}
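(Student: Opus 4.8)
The plan is to read this as a parametrized inverse function theorem together with a recursive solution for the asymptotic series of the inverse. The asymptotic expansion in $X$ is understood in the $C^\infty$-topology in $\xi$, so that for every $N$ and every multi-index $\gamma$ one has
\[
  \partial_\xi^\gamma \Big( \widehat{P}(\xi; X) - \sum_{|\alpha| \le N} C_\alpha(\xi)\, X^\alpha \Big) = \mathcal O(|X|^{N+1})
\]
uniformly on a fixed neighborhood of $\xi = 0$. First I would establish that $\widehat{P}(\cdot; X)$ is genuinely a local diffeomorphism near $\xi = 0$ for all sufficiently small $|X|$. Applying the expansion at order $N = 0$ with $|\gamma| = 1$ gives $d_\xi \widehat{P}(\xi; X) = d_\xi C_0(\xi) + \mathcal O(|X|)$; since $C_0 = \widehat{P}_0$ is a local diffeomorphism, $d_\xi C_0(0)$ is invertible, hence $\det d_\xi C_0$ stays bounded away from zero on a small ball $B \ni 0$, and for $|X|$ small the perturbation keeps $\det d_\xi \widehat{P}(\cdot; X)$ bounded away from zero on $B$. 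The quantitative inverse function theorem, with uniform bounds in $X$, then produces a smooth (in the target variable $\mu$) local inverse $G(\cdot; X)$ defined on a neighborhood of $\widehat{P}_0(0)$ independent of the small parameter $X$.

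Next I would construct the asymptotic expansion of $G$ formally by matching powers of $X$. Writing the ansatz $G(\mu; X) \sim \sum_\beta D_\beta(\mu)\, X^\beta$ and inserting it into the identity $\widehat{P}(G(\mu; X); X) = \mu$, I would expand $\widehat{P}$ by its series and each composite $C_\alpha(G)$ by Taylor's formula about $D_0(\mu)$, then collect coefficients of $X^\beta$. The order-zero identity reads $C_0(D_0(\mu)) = \mu$, which forces $D_0 = C_0^{-1} = (\widehat{P}_0)^{-1}$, the asserted leading term. At each higher order $\beta$ the collected terms take the form
\[
  d_\xi C_0\big(D_0(\mu)\big)\, D_\beta(\mu) = \Phi_\beta(\mu),
\]
where $\Phi_\beta$ is a universal polynomial expression (governed by the multivariable Fa\`a di Bruno formula) in the already-determined $D_{\beta'}$ with $\beta' < \beta$ together with the $C_\alpha$ and their derivatives evaluated at $D_0(\mu)$. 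Since $d_\xi C_0(D_0(\mu))$ is the differential of the diffeomorphism $C_0$ at the point $C_0^{-1}(\mu)$, it is invertible, so each $D_\beta$ is uniquely and smoothly determined, defining all coefficients recursively.

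Finally I would verify that this formal series is genuinely asymptotic to $G$ in the $C^\infty$-topology. Fixing $N$ and putting $G_N(\mu; X) = \sum_{|\beta| \le N} D_\beta(\mu)\, X^\beta$, the construction guarantees $\widehat{P}(G_N(\mu; X); X) = \mu + \mathcal O(|X|^{N+1})$, where the remainder absorbs both the tail of the series of $\widehat{P}$ and the truncation of the Taylor expansions. Comparing this with $\widehat{P}(G(\mu; X); X) = \mu$ and using that $\widehat{P}(\cdot; X)$ is invertible with Jacobian bounded below uniformly in $X$, I would deduce the $C^0$ estimate $|G(\mu; X) - G_N(\mu; X)| = \mathcal O(|X|^{N+1})$. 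The main obstacle is upgrading this to the full $C^\infty$ estimate: I expect to differentiate the defining identity in $\mu$ and bootstrap, controlling all $\mu$-derivatives of $G$ through the relation $d_\mu G = \big(d_\xi \widehat{P}\big)^{-1}\!\circ G$ and the chain-rule (Fa\`a di Bruno) formulas for derivatives of an inverse map, relying throughout on the uniform-in-$X$ bounds on $(d_\xi \widehat{P})^{-1}$ from the first step. This yields $G \sim \sum_\beta D_\beta(\mu)\, X^\beta$ with first term $(\widehat{P}_0)^{-1}$, as claimed.
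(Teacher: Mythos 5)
The paper itself contains no proof of Proposition \ref{D.A inverse}: it is imported verbatim from Ref.~\cite{QS14}, so there is no in-paper argument to compare yours against. Judged on its own merits, your proof is correct and is the natural, essentially standard argument one would expect behind this statement: uniform invertibility of $\widehat{P}(\cdot;X)$ from the quantitative inverse function theorem (using that the $C^\infty$ expansion gives uniform $C^2$ bounds and a uniform lower bound on the Jacobian), recursive determination of the coefficients $D_\beta$ by matching powers of $X$, which works precisely because $d_\xi C_0(D_0(\mu))$ is invertible, a $C^0$ remainder estimate obtained by comparing $\widehat{P}(G_N(\mu;X);X)$ with $\widehat{P}(G(\mu;X);X)=\mu$ through the uniformly Lipschitz inverse, and a bootstrap to all $\mu$-derivatives. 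Two points deserve tightening. First, the relation you write as $d_\mu G = \bigl(d_\xi \widehat{P}\bigr)^{-1}\!\circ G$ is imprecise notation: it should read $d_\mu G(\mu;X) = \bigl[d_\xi\widehat{P}\bigl(G(\mu;X);X\bigr)\bigr]^{-1}$, the matrix inverse of the Jacobian evaluated at $G$. Second, the final $C^\infty$ step is only sketched; to close it, observe that for every $\kappa$ the function $\partial_\xi^\kappa\widehat{P}\bigl(G(\mu;X);X\bigr)$ admits an asymptotic expansion in $X$ (compose the $C^\infty$ expansion of $\partial_\xi^\kappa\widehat{P}$ with the already-established $C^0$ expansion of $G$, using the uniform bounds on one more $\xi$-derivative), hence so does $\bigl[d_\xi\widehat{P}(G)\bigr]^{-1}$ (inversion of a matrix-valued expansion with invertible leading term), hence so does $d_\mu G$; induction on the order of differentiation, together with uniqueness of asymptotic expansions, then yields the expansion of every $\partial_\mu^\gamma G$ consistent with term-by-term differentiation of your formal series.
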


Theorem \ref{mtheo2} allows us to define the monodromy of small non-selfadjoint perturbed operators as the monodromy of asymptotic pseudo-lattices. Then we have,
\begin{defi} \label{qism}
 We suppose that $P_{\varepsilon, \lambda}$ is an operator and $U$ is a set as in Theorem \ref{mtheo2}. Then we define the spectral monodromy of $P_{\varepsilon, \lambda}$, denoted by $[\mathcal M_{sp}(P_{\varepsilon, \lambda})]$, as the monodromy of the asymptotic pseudo-lattice $(\sigma( P_{\varepsilon, \lambda}), U(\varepsilon) )$, discussed in Section \ref{mapl}.
\end{defi}

The class $[\mathcal M_{sp}(P_{\varepsilon, \lambda})] \in \check{H}^1(U, GL(2, \mathbb Z) )$, defined by the $1$-cocycle $\{M_{ij} \in GL(2, \mathbb Z) \}_{i, j \in \mathcal{J} } $ of transition maps between local pseudo-charts, given in \eqref{transition-pseu}.
We note that  by the discreteness of the integer group $GL(2, \mathbb Z) $, this spectral monodromy does not depend on $\lambda$ small enough and the classical parameters $h, \varepsilon $.
 Moreover, with the help of \eqref{z} and \eqref{psi} we have
\begin{equation}  \label{d}
M_{ij}= d \big [ \widetilde{f}^{i}_{\lambda,  0} \circ (\widetilde{f}_{ \lambda,  0} ^{j})^{-1} \big ]= d \big [ (\psi^i)^{-1} \circ \psi_j \big ], \ i, j \in \mathcal{J}.
\end{equation}



On the other hand, for a nearly integrable system, there exists a geometric invariant for its invariant KAM tori, defined by Broer and co-worker in Ref. \cite{Broer07}. We shall call the KAM monodromy.
We discuss now the relationship between the spectral monodromy and the KAM monodromy.

The result of Ref. \cite{Broer07} showed that there exists a nowhere dense subset of $H_{p_\lambda}-$invariant KAM tori in $X$ of the form $\bigcup_{a \in C} \Lambda_{a, \lambda},$ where $C \subset U $ is a nowhere dense subset of large measure when $\lambda $ small enough.
Moreover, the measure of KAM tori in $X$ is large,
$$\mu(X \setminus \bigcup_{a \in C} \Lambda_{a, \lambda})  = \mathcal O( \alpha ) \mu (X). $$
The KAM monodromy is defined as the obstruction against global triviality of a $\mathbb Z^2-$bundle over the base $U$, denoted by $\mathscr F$, which is an extension from the $\mathbb Z^2-$bundle $H_1(\Lambda_{a, \lambda}, \mathbb Z) \rightarrow a \in C$.
Here $H_1(\Lambda_{a, \lambda}, \mathbb Z)$ is the first homology group of $\Lambda_{a, \lambda}$.

On the KAM tori $\Lambda_{ \lambda}$ locally embedded in a canonical transformation as in \eqref{kappa vc} such that  $\Lambda_{\lambda} \simeq \Lambda_{\xi_\lambda}$, we have locally $ (p_{\lambda}( \xi_\lambda), \langle q \rangle ( \xi_\lambda ) )=\psi(\xi_\lambda)$.
It is classical, see Ref. \cite{Duis80}, that the (periodic) Hamiltonian flows of $p_\lambda$ and $\langle q \rangle$ from a point on $\Lambda_{ \lambda}$ form a basis of $H_1(\Lambda_{\lambda}, \mathbb Z)$.
Therefore the transition maps between overlapping local trivializations of the bundle $\mathscr F$ are

$$\{ {}^t \big ( d \big [ (\psi^i)^{-1} \circ \psi_j \big ] \big ) ^{-1} \}, \ i, j \in \mathcal{J}.$$
By this result and \eqref{d}, we can state that the spectral monodromy of $P_{\varepsilon, \lambda}$  allows to recover completely the KAM monodromy. More precisely we have.
\begin{theo} \label{reco}
The spectral monodromy of $P_{\varepsilon, \lambda}$, defined in Def. \ref{qism}, is the adjoint of the KAM monodromy of the nearly integrable system $p_{ \lambda}$.
\end{theo}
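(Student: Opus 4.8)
The plan is to represent both monodromies by explicit $GL(2,\mathbb{Z})$-valued \v{C}ech $1$-cocycles over the common base $U$ and to verify that these cocycles differ by the transpose-inverse operation, which is precisely what ``adjoint'' means for classes in $\check{H}^1(U,GL(2,\mathbb{Z}))$. First I would recall, from Definition \ref{qism}, Proposition \ref{dl} and formula \eqref{d}, that the spectral monodromy of $P_{\varepsilon,\lambda}$ is carried by the cocycle $\{M_{ij}\}_{i,j\in\mathcal{J}}$ with
$$M_{ij} = d\big[(\psi^i)^{-1}\circ\psi_j\big], \qquad \psi = (p_{\lambda,\infty},\langle q \rangle),$$
see \eqref{psi}. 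On each overlap $U^i\cap U^j$ this is a locally constant integer matrix, namely the Jacobian of the transition $\xi^j\mapsto\xi^i$ between the two action-type coordinate systems $\xi^\bullet=\psi_\bullet^{-1}$ that the leading charts \eqref{z} induce on $U$.

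Next I would write the cocycle of the KAM monodromy of \cite{Broer07} in the same charts. Over each $U^j$ the $\mathbb{Z}^2$-bundle $\mathscr{F}$ is trivialized by a distinguished integral basis of $H_1(\Lambda_{a,\lambda},\mathbb{Z})$: since $p_{\lambda,\infty}$ and $\langle q \rangle$ are both functions of the actions alone they Poisson-commute, and the period lattice of the joint flow of the commuting pair $(p_{\lambda,\infty},\langle q \rangle)$ supplies, by the classical Duistermaat correspondence \cite{Duis80}, a canonical $\mathbb{Z}$-basis of $H_1$. The key step is to track how this basis changes from chart to chart. In angle-action coordinates the joint time-$\tau$ flow ($\tau\in\mathbb{R}^2$) translates the angle by ${}^t(d\psi)\,\tau$, so the period lattice equals $2\pi\,{}^t(d\psi)^{-1}\mathbb{Z}^2$; this lattice is intrinsic because a change of chart sends $2\pi\mathbb{Z}^2$ to itself. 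Using the chain rule $d\psi_j=d\psi_i\,M_{ij}$ forced by $\xi^i=M_{ij}\xi^j+\mathrm{const}$, the two distinguished bases over $U^i\cap U^j$ differ by
$$N_{ij} = {}^t(d\psi_i)\,{}^t(d\psi_j)^{-1} = {}^tM_{ij}^{-1},$$
which reproduces the transition maps of $\mathscr{F}$ announced just before the statement.

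It then remains only to compare the two cocycles. From $N_{ij}={}^tM_{ij}^{-1}$ one reads $M_{ij}={}^tN_{ij}^{-1}$, so the spectral cocycle $\{M_{ij}\}$ is the contragredient of the KAM cocycle $\{N_{ij}\}$; passing to the classes \eqref{mono} in $\check{H}^1(U,GL(2,\mathbb{Z}))$ this says exactly that the spectral monodromy is the adjoint of the KAM monodromy. Equivalently, at the level of holonomy one has $\rho_{\mathrm{sp}}(\gamma)={}^t\rho_{\mathrm{KAM}}(\gamma)^{-1}$ modulo conjugation for every $\gamma\in\pi_1(U)$. The discreteness of $GL(2,\mathbb{Z})$ guarantees that all matrices involved are independent of $h,\varepsilon$ and of $\lambda$ small enough, so the identification is honest at the level of these classes.

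The main obstacle I anticipate is the second step: establishing the precise contragredient relation between the action-coordinate transition Jacobians $M_{ij}$ and the dynamical homology basis, through the period lattice of $(p_{\lambda,\infty},\langle q \rangle)$. This is where the duality between actions (which transform by $M_{ij}$) and frequencies/periods (which transform by ${}^tM_{ij}^{-1}$) must be made rigorous. A secondary technical point is that the KAM tori fill only a nowhere-dense set $C\subset U$ of large measure, so one must invoke the extension argument of \cite{Broer07} to see that the locally constant integer transition maps $N_{ij}$, first defined over $C$, extend uniquely and compatibly to the full overlaps $U^i\cap U^j$, producing a genuine $\mathbb{Z}^2$-bundle $\mathscr{F}$ over $U$ and a well-defined KAM monodromy class to compare with.
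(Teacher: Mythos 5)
Your proposal follows essentially the same route as the paper: both identify the spectral cocycle as $M_{ij}=d\big[(\psi^i)^{-1}\circ\psi_j\big]$ via \eqref{d}, represent the KAM bundle $\mathscr F$ of \cite{Broer07} in the same charts through the Duistermaat correspondence between the period lattice of the commuting pair $(p_{\lambda,\infty},\langle q\rangle)$ and a basis of $H_1(\Lambda_{a,\lambda},\mathbb Z)$, and conclude that the two cocycles are contragredient, i.e.\ ${}^tM_{ij}^{-1}$ versus $M_{ij}$. Your write-up is in fact more detailed than the paper's, which leaves the period-lattice computation and the extension from the nowhere dense set $C$ to citations of \cite{Duis80} and \cite{Broer07}; your explicit derivation of the lattice $2\pi\,{}^t(d\psi)^{-1}\mathbb Z^2$ and of the chain rule $d\psi_j=d\psi_i M_{ij}$ supplies exactly the steps the paper compresses.
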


In conclusion, the spectral monodromy of small non-selfadjoint perturbations of a selfadjoint operator admitting a principal symbol that is a nearly integrable and globally non-degenerate system, is well defined directly from its spectrum, independent of small perturbations.
Moreover, the spectral monodromy allows to recover the monodromy of KAM invariant tori of the underlying nearly integrable system.





\section*{Acknowledgement} This work was completed during my visit to the Vietnam Institute for Advanced Study in Mathematics (VIASM). We would like to thank the Institute for its support and hospitality. We would like also to thank the Vietnam National University of Agriculture, who gave me a good opportunity to develop my research.


\bigskip

\end{document}